\newcommand{\abs}[1]{\left \lvert #1 \right\rvert} 
\newcommand{\inner}[2]{\left< #1, #2 \right>} 
\renewcommand{\vec}[1]{\boldsymbol{#1}  } 
\renewcommand{\d}[2]{\frac{d #1}{d #2}} 
\renewcommand{\d}[2]{\frac{d #1}{d #2}} 
\newcommand{\grad}[1]{\vec{\nabla} #1} 
\newcommand{\lap}{\nabla^2} 
\newcommand{\laptwo}{\nabla^4} 
\newcommand{\unit}[1]{\hat{\vec{#1}}}
\newcommand{\ov}[1]{{#1}^\ast }
\newcommand{\C}{\mathbb{C}}
\newcommand{\intz}{\int_{z_1}^{z_2}}
\newcommand{\Ac}{\mathcal{A}}
\newcommand{\Bc}{\mathcal{B}}
\newcommand{\Cc}{\mathcal{C}}
\newcommand{\Lc}{\mathcal{L}}
\newtheorem{theorem}{Theorem}[section]
\newtheorem{proposition}[theorem]{Proposition}
\newtheorem{lemma}[theorem]{Lemma}
\newtheorem{definition}[theorem]{Definition}
\begin{document}


\title{Normal modes with boundary dynamics in geophysical fluids} 



\author{Houssam Yassin}
\email[]{hyassin@princeton.edu}
\affiliation{Program in Atmospheric and Oceanic Sciences, Princeton University,
Princeton, NJ 08544, USA}


\date{\today}

\begin{abstract}
Three-dimensional geophysical fluids support both internal and boundary-trapped waves. To obtain the normal modes in such fluids we must solve a differential eigenvalue problem for the vertical structure (for simplicity, we only consider horizontally periodic domains). If the boundaries are dynamically inert (e.g., rigid boundaries in the Boussinesq internal wave problem, flat boundaries in the quasigeostrophic Rossby wave problem) the resulting eigenvalue problem typically has a Sturm-Liouville form and the properties of such problems are well-known. However, when restoring forces are also present at the boundaries, then the equations of motion contain a time-derivative in the boundary conditions and this leads to an eigenvalue problem where the eigenvalue correspondingly appears in the boundary conditions. In certain cases, the eigenvalue problem can be formulated as an eigenvalue problem in the Hilbert space $L^2\oplus \C$ and this theory is well-developed. Less explored is the case when the eigenvalue problem takes place in a Pontryagin space, as in the Rossby wave problem over sloping topography. This article develops the theory of such problems and explores the properties of wave problems with dynamically-active boundaries. The theory allows us to solve the initial value problem for quasigeostrophic Rossby waves in a region with sloping bottom (we also apply the theory to two Boussinesq problems with a free-surface). For a step-function perturbation at a dynamically-active boundary, we find that the resulting time-evolution consists of waves present in proportion to their projection onto the dynamically-active boundary. 
\end{abstract}

\pacs{}

\maketitle 

\section{Introduction}\label{S-intro}

An important tool in the study of wave motion near a stable equilibrium is the separation of variables. When applicable, this elementary technique transforms a linear partial differential equation into an ordinary differential eigenvalue problem for each coordinate \cite[e.g.,][]{hillen_partial_2012}. Upon solving the differential eigenvalue problems, one obtains the normal modes of the physical system. The normal modes are the fundamental wave motions for the given restoring forces, each mode represents an independent degree of freedom in which the physical system can oscillate, and any solution of the wave problem may be written as a linear combination of these normal modes. 

To derive the normal modes, we must first linearize the dynamical equations of motion about some equilibrium state. We then encounter linearized restoring forces of two kinds: 
\begin{itemize}
	\item[\emph{1.}] volume-permeating forces experienced by fluid particles in the interior, and
	\item[\emph{2.}] boundary-confined forces only experienced by fluid particles at the boundary.
\end{itemize} 
Examples of volume-permeating forces include the restoring forces resulting from continuous density stratification and continuous volume potential vorticity gradients. These restoring forces respectively result in internal gravity waves \citep{sutherland_internal_2010} and Rossby waves \citep{Vallis2017}. Examples of boundary-confined restoring forces include the gravitational force at a free-surface (i.e., at a jump discontinuity in the background density), forces arising from gradients in surface potential vorticity \citep{schneider_boundary_2003}, and the molecular forces giving rise to surface tension. These restoring forces respectively result in surface gravity waves \citep{sutherland_internal_2010}, topographic/thermal waves \citep{hoskins_use_1985}, and capillary waves \citep{lamb_hydrodynamics_1975}. 

In the absence of  boundary-confined restoring forces, we can often apply Sturm-Liouville theory \cite[e.g.,][]{hillen_partial_2012,zettl_sturm-liouville_2010} to the resulting eigenvalue problem. We thus obtain a countable infinity of waves whose vertical structures form a basis of $L^2$, the space of square-integrable functions (see \S \ref{math-section}), and, given some initial vertical structure, we know how to solve for the subsequent time-evolution as a linear combination for linearly independent waves. Moreover, a classic result of Sturm-Liouville theory is that the $n$th mode has $n$ internal zeros.

In the presence of boundary-confined restoring forces, the governing equations have a time-derivative in the boundary conditions. The resulting eigenvalue problem correspondingly contains the eigenvalue parameter in the boundary conditions. Sturm-Liouville theory is inapplicable to such problems.

In this article, we present a general method for solving these problems by delineating a generalization of Sturm-Liouville theory. Some consequences of this theory are the following. There is a countable infinity of waves whose vertical structures form a basis of $L^2\oplus\C^s$, where $s$ is the number of dynamically-active boundaries; thus, each boundary-trapped wave, in mathematically rigorous sense, provides an additional degree of freedom to the problem. The modes satisfy an orthogonality relation involving boundary terms, the modes may have a negative norm, and the modes may have finite jump discontinuities at dynamically-active boundaries (although the \emph{solutions} are always continuous, see \S \ref{S-Boussinesq-rot}). When negative norms are possible (as in quasigeostrophic theory), there is a new expression for the Fourier coefficients that one must use to solve initial value problems [see equation \eqref{expansion}]. We can also expand boundary step-functions (representing some boundary localized perturbation) as a sum of modes. Moreover, the $n$th mode may not have $n$ internal zeros; indeed, depending on physical parameters in the problem, two or three linearly independent modes with an identical number of internal zeros may be present.  

We also show that the eigenfunction expansion of a function is term-by-term differentiable, with the derivative series converging uniformly on the whole interval, regardless of the boundary condition the function satisfies at the dynamically-active boundaries. This property is in contrast with a traditional Sturm-Liouville eigenfunction expansion where the term-by-term derivative converges uniformly only if the function satisfies the same boundary condition as the eigenfunctions.

We apply the theory to three geophysical wave problems. The first is that of a Boussinesq fluid with a free-surface; we find that the $n$th mode has $n$ internal zeros. The second example is that of a rotating Boussinesq fluid with a free-surface where we assume that the stratification suppresses rotational effects in the interior but not at the upper boundary. We find that there are two linearly independent modes with $M$ internal zeros, where the integer $M$ depends on the ratio of the Coriolis parameter to the horizontal wavenumber, and that the eigenfunctions have a finite jump discontinuity at the upper boundary. The third application is to a quasigeostrophic fluid with a sloping lower boundary. We find that modes with an eastward phase speed have a negative norm whereas modes with a westward phase speed have a positive norm (the sign of the norm has implications for the relative phase of a wave and for series expansions). Moreover, depending on the propagation direction, there can be two linearly independent modes with no internal zeros. For all three examples, we outline the properties of the resulting series expansions and provide the general solution. We also consider the time-evolution resulting from a vertically localized perturbation at a dynamically-active boundary; we idealize such a perturbation as a boundary step-function. The step-function perturbation induces a time-evolution in which the amplitude of each constituent wave is proportional to the projection of that wave onto the boundary. \label{qg-time-evolution}

To our knowledge, most of the above results cannot be found in the literature [however, the gravity wave orthogonality relation has been noted before, e.g., \cite{gill_atmosphere-ocean_2003} and \cite{kelly_vertical_2016} for the hydrostatic case and  \cite{olbers_internal_1986} and \cite{early_fast_2020} for the non-hydrostatic case]. For instance, we provide the only solution to the initial value problem for Rossby waves over topography in the literature [equation \eqref{qg-time-evolution}]. Moreover, many of the properties we discuss arise in practical problems in physical oceanography. The number of internal zeros of Rossby waves is also a useful quantity in observational physical oceanography [e.g., \cite{clement_vertical_2014} and \cite{de_la_lama_vertical_2016}]. In addition, the question of whether the quasigeostrophic baroclinic modes are complete is a controversial one. 
\cite{lapeyre_what_2009} has suggested that the baroclinic modes are incomplete because they assume a vanishing surface buoyancy anomaly. Consequently, \cite{smith_surface-aware_2012} address this issue by deriving an $L^2\oplus \C^2$ basis for quasigeostrophic theory. Yet many authors, citing completeness theorems from Sturm-Liouville theory, insist that the baroclinic modes are indeed complete and can represent all quasigeostrophic states \citep{ferrari_distribution_2010,lacasce_surface_2012,rocha_galerkin_2015}. This article shows that, by including boundary-confined restoring forces, we obtain a set of modes with additional degrees-of-freedom. These degrees-of-freedom manifest in the behaviour of eigenfunction expansions at the boundaries.
In addition, the distinction between $L^2$ and $L^2\oplus \C^s$ bases that we present here is useful for equilibrium statistical mechanical calculations where one must decompose fluid motion onto a complete set of modes \citep{bouchet_statistical_2012, venaille_catalytic_2012}.

The plan of the article is the following. We formulate the mathematical theory in \S \ref{math-section}. We then apply the theory to the two Boussinesq wave problems, in \S \ref{S-Boussinesq}, and to the quasigeostrophic wave problem, in \S \ref{S-QG}. We consider the time-evolution of a localized perturbation at a dynamically-active boundary in \S \ref{S-step-forcing}. We then conclude in \S \ref{S-conclusion}.

\section{The eigenvalue problem} \label{math-section}
	
In this section, we outline the theory of the differential eigenvalue problem
\begin{align} \label{EigenDiff}
	-(p \, \phi')' + q \, \phi &= \lambda \, r \, \phi \quad \text{for } \quad z\in \left(z_1,z_2\right)\\
	\label{EigenB1}
	- \left[a_1 \, \phi(z_1) - b_1 \, (p \, \phi')(z_1)\right] &= \lambda \left[c_1 \, \phi(z_1) - d_1 \, (p \, \phi')(z_1)\right] \\
	\label{EigenB2}
	- \left[a_2 \, \phi(z_2) - b_2 \, (p \, \phi')(z_2)\right] &= \lambda \left[c_2 \, \phi(z_2) - d_2 \, (p \, \phi')(z_2)\right] 	   	
\end{align}
where $p^{-1},q$, and $r$ are real-valued integrable functions; $a_i,b_i,c_i$, and $d_i$ are real numbers with $i\in\{1,2\}$; and where $\lambda \in \C$ is the eigenvalue parameter. We further assume that $p>0$ and $r>0$, that $p$ and $r$ are twice continuously differentiable, that $q$ is continuous,  and that $(a_i,b_i)\neq (0,0)$ for $i\in\{1,2\}$.
The system of equations \eqref{EigenDiff}\textendash\eqref{EigenB2} is an eigenvalue problem for the eigenvalue $\lambda \in \C$ and differs from a regular Sturm-Liouville problem in that $\lambda$ appears in the boundary conditions \eqref{EigenB1} and \eqref{EigenB2}.  That is, setting $c_{i}= d_{i} = 0$ recovers the traditional Sturm-Liouville problem. The presence of $\lambda$ as part of the boundary condition leads to some fundamentally new mathematical features that are the subject of this section and fundamental to the physics of this study. 

It is useful to define the two boundary parameters
    \begin{equation}\label{Di}
    	D_i = (-1)^{i+1} \left(a_i \, d_i - b_i \, c_i\right) \quad i=1,2. 
    \end{equation}
Just as the function $r$ acts as a weight for the interval $(z_1,z_2)$ in traditional Sturm-Liouville problems, the constants $D_i^{-1}$ will play analogous roles for the boundaries $z=z_i$ when $D_i\neq 0$.

\subsubsection*{Outline of the mathematics}

The right-definite case, when the $D_i \geq 0$ for $i\in\{1,2\}$, is well-known in the mathematics literature; most of the right-definite results in this section are due to \cite{evans_non-self-adjoint_1970}, \cite{walter_regular_1973}, and \cite{fulton_two-point_1977}. In contrast, the left-definite case, defined below, is much less studied. In this section, we generalize the right-definite results of \cite{fulton_two-point_1977} to the left-definite problem as well as provide an intuitive formulation \cite[in terms of functions rather than vectors, for a vector formulation see][]{fulton_two-point_1977} of the eigenvalue problem. 

In section \S \ref{S-Op-Form} we state the conditions under which we obtain real eigenvalues and a basis of eigenfunctions. We proceed, in  \S \ref{S-Prop-of-Eig}, to explore the properties of eigenfunctions and eigenfunction expansions. Finally, in \S \ref{S-oscillation}, we discuss oscillation properties of the eigenfunctions. Additional properties of the eigenvalue problem are found in appendix \ref{S-additional-properties} and a literature review, along with various technical proofs, is found in appendix \ref{S-math-app}.

\subsection{Formulation of the problem} \label{S-Op-Form}
\subsubsection{The functions space of the problem}
We denote by $L^2$ the Hilbert space of square-integrable ``functions" $\phi$ on the interval $(z_1,z_2)$ satisfying 
\begin{equation}
	\intz \abs{\phi}^2 \, r \, \mathrm{d}z < \infty.
\end{equation}
To be more precise, the elements of $L^2$ are not functions but rather equivalence classes of functions \cite[e.g.,][ section I.3]{reed_methods_1980}. Two functions, $\phi$ and $\psi$, are equivalent in $L^2$ (i.e., $\phi=\psi$ in $L^2$) if they agree in a mean-square sense on $[z_1,z_2]$,
\begin{equation}\label{L2-equal}
	\intz \abs{\phi(z) - \psi(z)}^2 \, r \, \mathrm{d}z = 0.
\end{equation}
Significantly, we can have $\phi=\psi$ in $L^2$ but $\phi\neq \psi$ pointwise. 
	
Furthermore, as a Hilbert space, $L^2$ is endowed with a positive-definite inner product
\begin{equation}\label{L2-inner}
	\inner{\phi}{\psi}_\sigma = \intz \ov{\phi}\, \psi \, \mathrm{d}\sigma = \intz \ov{\phi} \, \psi \ r\, \mathrm{d}z,
\end{equation}
where the symbol $\ov{\ }$ denotes complex conjugation and the measure $\sigma$ associated $L^2$ induces a differential element $\mathrm{d}\sigma = r\,\mathrm{d}z$ (see appendix \ref{S-additional-properties}). The positive-definiteness is ensured by our assumption that $r>0$ (i.e., $\inner{\phi}{\phi}_{\sigma} > 0$ for $\phi \neq 0$ when $r>0$).

It is well-known that traditional Sturm-Liouville problems [i.e., equations \eqref{EigenDiff}\textendash\eqref{EigenB2} with $c_i=d_i=0$ for $i=1,2$] are eigenvalue problems in some subspace of $L^2$ \citep{debnath_introduction_2005}. For the more general case of interest here, the eigenvalue problem occurs over a ``larger'' function space denoted by $L^2_\mu$ which we construct in appendix \ref{S-additional-properties}.

Let the integer $s\in\{0,1,2\}$ denote the number of $\lambda$-dependent boundary conditions and let $S$ denote the set
\begin{equation}\label{S-set}
	S = \{j \ | \ j \in \{1,2\} \text{ and } (c_j,d_j) \neq (0,0) \}.
\end{equation}
$S$ is one of $\emptyset,\{1\},\{2\},\{1,2\}$ and $s$ is the number of elements in the set $S$. In appendix \ref{S-additional-properties}, we show that $L^2_\mu$ is isomorphic to the space $L^2\oplus \C^s$ and is thus ``larger'' than $L^2$ by $s$ dimensions. 

We denote elements of $L^2_\mu$ by upper case letters $\Psi$; we define $\Psi(z)$ for $z\in[z_1,z_2]$ by
\begin{equation}\label{L2_mu-element}
	\Psi(z) = 
	\begin{cases}
		\Psi(z_i) \quad & \textrm{at } z=z_i, \textrm{ for }\, i \in S, \\
		\psi(z) \quad & \textrm{otherwise},
	\end{cases}
\end{equation}
where $\Psi(z_i) \in \C$ are constants, for $i\in S$, and the corresponding lower case letter $\psi$ denotes an element of $L^2$. 
Two elements $\Phi$ and $\Psi$ of $L^2_\mu$ are equivalent in $L^2_\mu$ if and only if
\begin{itemize}
	\item [\emph{1.}] $\Phi(z_i) = \Psi(z_i)$ for $i\in S$, and
	\item [\emph{2.}] $\phi(z)$ and $\psi(z)$ are equivalent in $L^2$ [i.e., as in equation \eqref{L2-equal}].
\end{itemize}
Here, $\Phi$, as an element of $L^2_\mu$, is defined as in equation \eqref{L2_mu-element}. The primary difference between $L^2$ and $L^2_\mu$ is that $L^2_\mu$ discriminates between functions that disagree at $\lambda$-dependent boundaries.

The measure $\mu$ associated with $L^2_\mu$ (see appendix \ref{S-additional-properties}) induces a differential element
\begin{equation} \label{dmu}
	\mathrm{d}\mu(z) = \left[ r(z) + \sum_{i\in S} D_i^{-1} \, \delta(z-z_i) \right] \mathrm{d}z,
\end{equation}
where $\delta(z)$ is the Dirac delta. The induced inner product on $L^2_\mu$ is
\begin{equation}\label{dmu_inner}
	\inner{\Phi}{\Psi} = \intz \ov{\Phi} \, {\Psi} \, \mathrm{d}\mu = \intz \ov {\Phi} \, {\Psi} \, r \, \mathrm{d}z + \sum_{i\in S} D_i^{-1} \, \ov{\Phi(z_i)} \, \Psi(z_i).
\end{equation}
If $D_i > 0$ for $i\in S$ then this inner product is positive-definite and $L^2_\mu$ is a Hilbert space. However, this is not the case in general.

Let $\kappa$ denote the number of negative $D_i$ for $i\in S$ (the possible values are $\kappa=0,1,2$). Then $L^2_\mu$ has a $\kappa$-dimensional subspace of elements $\Psi$ satisfying 
\begin{equation}
	\inner{\Psi}{\Psi} < 0.
\end{equation}
This makes $L^2_\mu$ a Pontryagin space of index $\kappa$ \citep{bognar_indefinite_1974}. If $\kappa=0$ then $L^2_\mu$ is again a Hilbert space. In the present case, $L^2_\mu$ also has an infinite-dimensional subspace of elements $\psi$ satisfying 
\begin{equation}
	\inner{\Psi}{\Psi} > 0.
\end{equation}

\subsubsection{Reality and completeness} \label{S-real-complete}

In appendix \ref{S-eigen-in-L2-mu}, we reformulate the eigenvalue problem \eqref{EigenDiff}\textendash\eqref{EigenB2} as an eigenvalue problem of the form
\begin{equation}\label{Op-Eigenproblem}
	\Lc \, \Phi = \lambda \, \Phi
\end{equation}
in a subspace of $L^2_\mu$, where $\Lc$ is a linear operator and $\Phi$ an element of $L^2_\mu$. We also define the notions of right- and left-definiteness that are required for the reality and completeness theorem below. The following two propositions can be considered to define right- and left-definiteness for applications of the theory. Both propositions are obtained through straightforward manipulations (see appendix \ref{S-additional-properties}).

\begin{proposition}[Criterion for right-definiteness]\label{right-definite-prop}
	The eigenvalue problem \eqref{EigenDiff}\textendash\eqref{EigenB2} is right-definite if $r>0$ and $D_i>0$ for $i\in S$.
\end{proposition}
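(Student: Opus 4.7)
The plan is to unwind right-definiteness (as formalized in appendix \ref{S-additional-properties}) into the single statement that the form \eqref{dmu_inner} on $L^2_\mu$ is positive-definite, and then read the conclusion directly off the hypotheses. Once the measure $\mu$ of \eqref{dmu} has been constructed and the eigenvalue problem lifted to the operator equation \eqref{Op-Eigenproblem}, right-definiteness coincides with positive-definiteness of $\inner{\cdot}{\cdot}$ on $L^2_\mu$, so no further work beyond inspecting \eqref{dmu_inner} should be needed.

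Concretely, I would fix an arbitrary $\Phi \in L^2_\mu$ and write
\[
\inner{\Phi}{\Phi} = \intz |\Phi(z)|^2 \, r(z) \, \mathrm{d}z + \sum_{i \in S} D_i^{-1} \, |\Phi(z_i)|^2.
\]
Under the assumption $r>0$ on $(z_1,z_2)$, the integral is non-negative and vanishes if and only if $\phi = 0$ in $L^2$, by the equivalence relation \eqref{L2-equal}. Under the assumption $D_i > 0$ for each $i \in S$, every coefficient $D_i^{-1}$ in the boundary sum is strictly positive, so that sum is non-negative and vanishes if and only if $\Phi(z_i)=0$ for every $i \in S$. Consequently $\inner{\Phi}{\Phi} \geq 0$ throughout, and equality forces both pieces to vanish simultaneously, which by the equivalence relation stated just below \eqref{L2_mu-element} is exactly the statement that $\Phi = 0$ in $L^2_\mu$. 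This delivers positive-definiteness and, in the notation introduced in \S\ref{S-Op-Form}, forces $\kappa = 0$, so $L^2_\mu$ is in fact a genuine Hilbert space.

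I do not anticipate a substantive obstacle; the only step requiring any care is matching the operator-theoretic definition of right-definiteness given in appendix \ref{S-additional-properties} with positive-definiteness of \eqref{dmu_inner}. This is essentially a bookkeeping matter that is settled once one has verified that the lifted operator $\Lc$ is symmetric with respect to $\inner{\cdot}{\cdot}$ and that the boundary conditions \eqref{EigenB1}\textendash\eqref{EigenB2} are absorbed into the definition of $\mu$ via the Dirac-delta terms in \eqref{dmu}. Given that construction, the proposition reduces to the elementary pointwise observations above.
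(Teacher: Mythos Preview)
Your proposal is correct and follows exactly the approach the paper intends: the paper's own ``proof'' consists of the remark that propositions \ref{right-definite-prop} and \ref{left-definite-prop} follow by straightforward manipulation from the definition of right-definiteness in appendix \ref{S-eigen-in-L2-mu}, and your argument is precisely that manipulation spelled out.
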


\begin{proposition}[Criterion for left-definiteness] \label{left-definite-prop}
	The eigenvalue problem \eqref{EigenDiff}\textendash\eqref{EigenB2} is left-definite if the following conditions hold:
	\begin{itemize}
		\item [(i)] the functions $p,q$ satisfy $p>0, q \geq 0$,
		\item [(ii)] for the $\lambda$-dependent boundary conditions, we have 
			\begin{equation}
				\frac{a_i \, c_i}{D_i} \leq 0, \quad \frac{b_i \, d_i}{D_i} \leq 0, \quad (-1)^i \frac{a_i \, d_i}{D_i} \geq 0  \quad \text{for } i \in S.
			\end{equation}
		\item [(iii)] for the $\lambda$-independent boundary conditions, we have 
			\begin{equation}
				b_i = 0  \quad \text{or} \quad  (-1)^{i+1}\frac{a_i}{b_i} \geq 0 \quad \text{ if } b_i \neq 0   \quad \text{ for } i \in \{1,2\}\setminus S.
			\end{equation}
	\end{itemize}
\end{proposition}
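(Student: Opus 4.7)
The plan is to show that a natural quadratic form associated with $\Lc$, namely $\inner{\Lc\,\Phi}{\Phi}$ on $L^2_\mu$, is nonnegative under the listed hypotheses, since that positivity is precisely what ``left-definite'' demands. I would build this form by unfolding the action of $\Lc$ on an element $\Phi \in L^2_\mu$: in the interior $\Lc$ acts by $r^{-1}[-(p\,\phi')' + q\,\phi]$, and on each boundary coordinate $i \in S$ by the boundary data $-[a_i\phi(z_i) - b_i(p\phi')(z_i)]$, with the prescribed component $\Phi(z_i) = c_i\phi(z_i) - d_i(p\phi')(z_i)$ (this is the rewriting already discussed for the operator formulation).

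The first step is then to integrate by parts on the interior contribution, producing
\begin{equation}
\intz \bar\phi\,[-(p\phi')' + q\phi]\,\mathrm{d}z = \intz \bigl[p\,|\phi'|^2 + q\,|\phi|^2\bigr]\,\mathrm{d}z + \sum_{i=1,2} (-1)^{i+1}(p\phi')(z_i)\,\overline{\phi(z_i)}.
\end{equation}
Hypothesis (i) makes the bulk integral nonnegative. The next step is to fold the surface terms into the boundary pieces coming from the measure $\mu$. Writing $u_i = \phi(z_i)$ and $v_i = (p\phi')(z_i)$, the IBP boundary term plus the $D_i^{-1}\overline{\Phi(z_i)}\,(\Lc\Phi)(z_i)$ term combine, after using $(-1)^{i+1}D_i = a_i d_i - b_i c_i$, into
\begin{equation}
D_i^{-1}\bigl[-a_i c_i\,|u_i|^2 - b_i d_i\,|v_i|^2 + 2\,a_i d_i\,\mathrm{Re}(v_i\,\overline{u_i})\bigr], \qquad i \in S.
\end{equation}
This is a real Hermitian form in $(u_i,v_i)$ whose matrix is
\begin{equation}
\frac{1}{D_i}\begin{pmatrix} -a_i c_i & a_i d_i \\ a_i d_i & -b_i d_i \end{pmatrix}.
\end{equation}

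The third step is to recognize that the three inequalities in (ii) are exactly the criteria for this $2\times 2$ matrix to be positive semidefinite: the first two enforce nonnegativity of the diagonal, while the determinant equals $-D_i^{-2} a_i d_i(a_i d_i - b_i c_i) = -(-1)^{i+1}D_i^{-1}a_i d_i = (-1)^i a_i d_i/D_i$, whose nonnegativity is the third inequality. For each $i \notin S$ the boundary condition is $\lambda$-independent, so either $b_i = 0$ (forcing $u_i = 0$ and killing the surface term) or $v_i = (a_i/b_i)u_i$, in which case the residual surface term is $(-1)^{i+1}(a_i/b_i)|u_i|^2$, nonnegative by (iii). Adding all pieces yields $\inner{\Lc\Phi}{\Phi} \geq 0$, i.e.\ left-definiteness in the sense of appendix \ref{S-additional-properties}.

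The only subtlety I anticipate is the bookkeeping in step two: both signs $(-1)^{i+1}$ and the definition $D_i = (-1)^{i+1}(a_i d_i - b_i c_i)$ need to be tracked so that the cross-terms $\bar u_i v_i$ and $u_i\bar v_i$ genuinely coalesce into $2\,\mathrm{Re}(v_i\overline{u_i})$, confirming $\inner{\Lc\Phi}{\Phi} \in \R$ before the positive-semidefiniteness argument is applied. Once that cancellation is verified, the rest reduces to elementary linear algebra on the $2\times 2$ blocks, and the proof is complete by appealing to the definition of left-definiteness stated in the appendix.
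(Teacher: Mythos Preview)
Your proposal is correct and follows essentially the same route as the paper. The paper displays the quadratic form $Q(\Phi,\Phi)=\inner{\Phi}{\Lc\Phi}$ after integration by parts, obtaining the bulk term $\intz[p|\phi'|^2+q|\phi|^2]\,\mathrm{d}z$, the $\lambda$-independent boundary terms $(-1)^{i+1}(a_i/b_i)|\phi(z_i)|^2$, and the $\lambda$-dependent boundary terms written as the quadratic form with matrix $-D_i^{-1}\begin{pmatrix} a_ic_i & a_id_i\\ a_id_i & b_id_i\end{pmatrix}$ in the coordinates $(\phi(z_i),-(p\phi')(z_i))$; this is exactly your matrix after the change of sign in the second coordinate, and your explicit check that conditions (ii) are the positive-semidefiniteness criteria (nonnegative diagonal plus nonnegative determinant $(-1)^i a_id_i/D_i$) supplies the ``straightforward manipulations'' the paper leaves to the reader.
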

The notions of right and left-definiteness are not mutually exclusive. Namely, a problem can be neither right- or left-definite; both right- and left-definite; only right-definite; or only left-definite. In this article, we always assume that $p>0$ and $r>0$.


The reality of the eigenvalues and the completeness of the eigenfunctions in the space $L^2_\mu$ is given by the following theorem.
\begin{theorem}[Reality and completeness]\label{real-complete}

	Suppose the eigenvalue problem \eqref{EigenDiff}\textendash\eqref{EigenB2} is either right-definite or left-definite. Moreover, if the problem is not right-definite, we assume that $\lambda=0$ is not an eigenvalue. Then the eigenvalue problem  \eqref{EigenDiff}\textendash\eqref{EigenB2} has a countable infinity of real simple eigenvalues $\lambda_n$ satisfying 
	\begin{equation}
		\lambda_0 < \lambda_1 < \dots < \lambda_n < \dots \rightarrow \infty,
	\end{equation}
	with corresponding eigenfunctions $\Phi_n$. Furthermore, the set of eigenfunctions $\{\Phi_n\}_{n=0}^\infty$ is a complete orthonormal basis for $L^2_\mu$ satisfying
	\begin{equation}
		\inner{\Phi_m}{\Phi_n} = \pm \delta_{mn}.
	\end{equation} 
\end{theorem}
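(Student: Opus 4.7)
The plan is to reduce the problem to an abstract operator spectral problem, handle the right-definite case via classical compact self-adjoint Hilbert-space theory, and handle the left-definite case by reframing it inside a companion Hilbert space built from the energy form of $\Lc$. First I would cast \eqref{EigenDiff}\textendash\eqref{EigenB2} in the operator form \eqref{Op-Eigenproblem}, with $\Lc$ acting on the subspace of $L^2_\mu$ consisting of those $\Phi$ whose interior piece $\phi$ is twice continuously differentiable on $\inter$, that satisfy the $\lambda$-independent boundary conditions at the boundaries $z_i$ with $i\notin S$, and whose boundary values $\Phi(z_i)$ for $i\in S$ are linked to $\phi(z_i)$ and $(p\,\phi')(z_i)$ through the prescription built into $\Lc$. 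Integration by parts on the interior, combined with the boundary prescription of $\Lc$ and the factors $D_i^{-1}$ appearing in \eqref{dmu_inner}, yields the symmetry identity $\inner{\Lc\,\Phi}{\Psi}=\inner{\Phi}{\Lc\,\Psi}$; from this, eigenfunctions for distinct eigenvalues are automatically orthogonal in the $L^2_\mu$ inner product, and each eigenvalue is simple because \eqref{EigenDiff} is second-order and the condition at $z_1$ cuts the two-dimensional solution space to a one-dimensional eigenspace.

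In the right-definite case all $D_i>0$ for $i\in S$, so $L^2_\mu$ is a genuine Hilbert space. To obtain a discrete spectrum I would exhibit the resolvent $(\Lc - \lambda_0 I)^{-1}$ at a regular value $\lambda_0 \notin \R$ as an integral operator whose kernel is the Green's function built from two linearly independent solutions of the ODE chosen to satisfy the respective boundary data. Since $\inter$ is compact and the Green's function is continuous and bounded, the resolvent is Hilbert\textendash Schmidt and hence compact. The spectral theorem for compact self-adjoint operators then produces a countable family of real eigenvalues $\lambda_n$ with $|\lambda_n|\to\infty$ and a complete $L^2_\mu$-orthonormal eigenbasis; standard quadratic-form bounds on $\inner{\Lc\,\Phi}{\Phi}$ give a lower bound on the spectrum so the $\lambda_n$ can be ordered as stated.

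For the left-definite case the ambient space $L^2_\mu$ is a Pontryagin space of index $\kappa>0$, and direct Hilbert-space arguments fail. I would instead introduce the energy form on the domain of $\Lc$,
\begin{equation*}
[\Phi,\Psi] \;:=\; \inner{\Lc\,\Phi}{\Psi} \;=\; \intz\!\left(p\,\ov{\phi'}\,\psi' + q\,\ov{\phi}\,\psi\right)dz \;+\; (\text{boundary terms}),
\end{equation*}
obtained after integrating by parts. Conditions (i)\textendash(iii) of Proposition \ref{left-definite-prop} make every interior and every boundary contribution nonnegative, and together coercive, so the completion in $[\cdot,\cdot]$ is a Hilbert space $H_L$. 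The exclusion of $\lambda=0$ as an eigenvalue ensures that $\Lc$ is invertible, and a Green's function construction on the compact interval $\inter$ shows $\Lc^{-1}$ is a compact self-adjoint operator on $H_L$. The spectral theorem applied to $\Lc^{-1}$ in $H_L$ yields the real simple eigenvalues $\lambda_n\to\infty$ together with a $[\cdot,\cdot]$-orthonormal eigenbasis. Using $[\Phi_m,\Phi_n]=\lambda_n\,\inner{\Phi_m}{\Phi_n}$, the basis descends to an $L^2_\mu$-orthogonal family with $\inner{\Phi_n}{\Phi_n}=\lambda_n^{-1}$; the sign of $\lambda_n$ dictates whether the normalization is $+1$ or $-1$, accounting for the $\pm$ in the theorem. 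Density of the energy domain in $L^2_\mu$ transports the completeness from $H_L$ to $L^2_\mu$.

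The main obstacle I expect is the left-definite case: identifying the correct companion Hilbert space $H_L$, extracting coercivity of the energy form directly from the purely algebraic conditions on the boundary coefficients $(a_i,b_i,c_i,d_i)$, and then transporting completeness back to the indefinite space $L^2_\mu$ without losing or double-counting any eigenfunctions. By contrast, once the operator framework is in place the right-definite case is a routine application of compact self-adjoint spectral theory.
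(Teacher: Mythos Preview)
Your proposal is a valid strategy, but it diverges from the paper's proof in the left-definite case. The paper does \emph{not} pass to a companion Hilbert space built from the energy form $[\Phi,\Psi]=\inner{\Lc\Phi}{\Psi}$; instead it stays in the Pontryagin space $L^2_\mu$ itself and proves a dedicated spectral theorem (theorem \ref{positive-compact}) for positive compact operators on $\Pi_\kappa$, drawing on standard Pontryagin-space results of Bognar and Azizov\textendash Iohvidov. One then applies this directly to $\Lc^{-1}$, which is compact (Green's function on a finite interval) and positive (left-definiteness gives $\inner{\Lc^{-1}\Psi}{\Psi}\geq 0$), obtaining reality, completeness, and the sign pattern $\lambda_n\inner{\Phi_n}{\Phi_n}>0$ in one stroke. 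Your route\textemdash complete in the energy norm to a Hilbert space $H_L$, diagonalize $\Lc^{-1}$ there, then descend via $[\Phi_m,\Phi_n]=\lambda_n\inner{\Phi_m}{\Phi_n}$\textemdash is the classical ``left-definite'' approach and is also correct, with the advantage that it avoids indefinite-metric operator theory entirely. Its cost is exactly the obstacle you flag: transporting completeness from $H_L$ back to $L^2_\mu$ is not a density argument alone, since convergence in the two topologies is not equivalent; you need something like the compact embedding $H_L\hookrightarrow L^2_\mu$ (which follows from $p>0$ on the compact interval) together with the fact that the eigenfunctions form a Riesz basis, and this step deserves an explicit argument rather than a one-line remark. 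The paper's Pontryagin approach sidesteps this transport issue at the price of importing the Bognar/Azizov machinery.
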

\begin{proof}
	See appendix \ref{S-real-proof}.
\end{proof}
Recall that $\kappa$ denotes the number of negative $D_i$ for $i\in S$. We then have the following corollary of the proof of theorem \ref{real-complete}.

\begin{proposition}\label{eigenvalue-sign}
	Suppose the eigenvalue problem \eqref{EigenDiff}\textendash\eqref{EigenB2} is left-definite and that $\lambda=0$ is not an eigenvalue. Then there are $\kappa$ negative eigenvalues and their eigenfunctions satisfy
	\begin{equation}
		\inner{\Phi}{\Phi} < 0.
	\end{equation}
	The remaining eigenvalues are positive and their eigenfunctions satisfy
	\begin{equation}
		\inner{\Phi}{\Phi}>0.
	\end{equation}
\end{proposition}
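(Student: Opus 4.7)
The strategy is to combine a Green's formula computation with the abstract structure of the Pontryagin space $L^2_\mu$. First, I would evaluate $\inner{\Lc \, \Phi_n}{\Phi_n}$ in two ways. Since $\Phi_n$ is an eigenfunction of $\Lc$, one evaluation is immediate: $\inner{\Lc \, \Phi_n}{\Phi_n} = \lambda_n \inner{\Phi_n}{\Phi_n}$. For the other, I would integrate by parts in the interior and use the boundary conditions \eqref{EigenB1}\textendash\eqref{EigenB2} to rewrite the boundary values $(p\phi_n')(z_i)$ in terms of $\phi_n(z_i)$ and the boundary component $\Phi_n(z_i)$ of the $L^2_\mu$-element. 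This transforms the expression into an explicit quadratic form
\begin{equation}
	Q(\Phi_n) = \intz \left(p \, |\phi_n'|^2 + q \, |\phi_n|^2 \right) \mathrm{d}z + \sum_{i=1}^{2} \mathcal{B}_i(\Phi_n),
\end{equation}
where $\mathcal{B}_i$ collects the boundary terms produced by integration by parts together with the boundary contributions from the $L^2_\mu$ inner product \eqref{dmu_inner}.

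Second, I would invoke left-definiteness. Conditions (i)\textendash(iii) of Proposition~\ref{left-definite-prop} are designed precisely so that each $\mathcal{B}_i(\Phi_n)$ is non-negative and, together with $p>0$ and $q \geq 0$, so that the interior integral is non-negative; hence $Q(\Phi_n) \geq 0$. The hypothesis that $\lambda = 0$ is not an eigenvalue rules out the degenerate possibility $Q(\Phi_n)=0$, and so $Q(\Phi_n) > 0$ for every eigenfunction. Combining the two evaluations gives
\begin{equation}
	\lambda_n \, \inner{\Phi_n}{\Phi_n} \;=\; Q(\Phi_n) \;>\; 0,
\end{equation}
which immediately yields $\mathrm{sign}(\lambda_n) = \mathrm{sign}\inner{\Phi_n}{\Phi_n}$ and, in particular, that no eigenvalue vanishes.

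Third, I would count using the Pontryagin structure of $L^2_\mu$. By Theorem~\ref{real-complete} the eigenfunctions form a complete orthonormal basis with $\inner{\Phi_n}{\Phi_n} = \pm 1$. The closed linear span of those $\Phi_n$ with $\inner{\Phi_n}{\Phi_n} = -1$ is a negative-definite subspace, so its dimension is at most $\kappa$ by the definition of Pontryagin index. Its orthogonal complement is spanned by the positive-norm eigenfunctions and is positive-definite; in a Pontryagin space of index $\kappa$ any positive-definite subspace has codimension at least $\kappa$. The two bounds force exactly $\kappa$ eigenfunctions to have negative norm, and the sign correspondence established in the previous step then yields exactly $\kappa$ negative eigenvalues.

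The principal technical obstacle lies in the first step: verifying that the left-definiteness conditions (ii) and (iii) really do make each $\mathcal{B}_i(\Phi_n)$ non-negative. The difficulty is that the $L^2_\mu$ inner product weights the boundary points by $D_i^{-1}$, which is of undetermined sign, while the integration-by-parts contributions involve quadratic combinations of $\phi_n(z_i)$ and $(p\phi_n')(z_i)$ coupled through the six boundary parameters $a_i, b_i, c_i, d_i$. Showing that the stated inequalities on $a_i c_i / D_i$, $b_i d_i/D_i$, and $(-1)^i a_i d_i / D_i$ recombine these terms into a manifestly non-negative quadratic form in the boundary data is the heart of the argument, and this is the computation that justifies the terminology ``left-definite.'' Once it is in place, the remainder is essentially a consequence of Sylvester's law of inertia for Pontryagin spaces, and much of the underlying machinery is presumably already assembled in the proof of Theorem~\ref{real-complete}.
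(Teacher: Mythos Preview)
Your approach is essentially the same as the paper's. The paper states that Proposition~\ref{eigenvalue-sign} is a corollary of the proof of Theorem~\ref{real-complete}, and that proof (appendix~\ref{S-real-proof}) proceeds by applying Theorem~\ref{positive-compact} to the compact operator $\Lc^{-1}$. Theorem~\ref{positive-compact} is exactly your argument in abstract form: positivity of the operator gives $\lambda_n \inner{\Phi_n}{\Phi_n} \geq 0$, definiteness of eigenspaces (Bogn\'ar, theorem~VII.1.2) rules out $\inner{\Phi_n}{\Phi_n}=0$, and the Pontryagin index forces exactly $\kappa$ negative-norm eigenvectors (Bogn\'ar, theorem~IX.1.4). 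Your Sylvester-inertia phrasing of the count is equivalent to the paper's invocation of Bogn\'ar~IX.1.4.

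One correction to your framing: the ``principal technical obstacle'' you identify\textemdash showing that the boundary contributions $\mathcal{B}_i$ are non-negative under conditions (i)\textendash(iii)\textemdash is not part of this proposition at all. Left-definiteness is \emph{defined} (Definition in appendix~\ref{S-eigen-in-L2-mu}) as $Q(\Phi,\Phi) \geq 0$ for all $\Phi \in D(\Lc)$, and Proposition~\ref{eigenvalue-sign} takes this as a hypothesis. The boundary computation you describe is the content of Proposition~\ref{left-definite-prop}, which gives sufficient conditions for left-definiteness; here you may simply assume $Q(\Phi_n) \geq 0$ and proceed. Relatedly, your step ``$\lambda=0$ not an eigenvalue rules out $Q(\Phi_n)=0$'' needs the additional input that $\inner{\Phi_n}{\Phi_n} \neq 0$, which you only obtain later from Theorem~\ref{real-complete}; the paper handles this by citing the definiteness of eigenspaces directly.
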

In other words, proposition \ref{eigenvalue-sign} states that we have the relationship
\begin{equation}
	\lambda_n \inner{\Phi_n}{\Phi_n} > 0
\end{equation}
for left-definite problems.

\subsection{Properties of the eigenfunctions} \label{S-Prop-of-Eig}
For the remainder of \S \ref{math-section}, we assume that the eigenvalue problem \eqref{EigenDiff}\textendash\eqref{EigenB2} satisfies the requirements of theorem \ref{real-complete}.

\subsubsection{Eigenfunction expansions}

The eigenvalue problem \eqref{EigenDiff}\textendash\eqref{EigenB2} has \emph{eigenfunctions} $\{\Phi_n\}_{n=0}^\infty$ as well as corresponding \emph{solutions} $\{\phi_n\}_{n=0}^\infty$. In other words, while the $\phi_n$ are the solutions to the differential equation defined by equations \eqref{EigenDiff}\textendash\eqref{EigenB2} with $\lambda=\lambda_n$, the eigenfunctions required by the operator formulation of the problem [equation \eqref{Op-Eigenproblem}] are $\Phi_n$. The functions $\Phi_n$ and $\phi_n$ are related by equation \eqref{L2_mu-element}, with the boundary values $\Phi_n(z_i)$ of $\Phi_n$ determined by
\begin{equation}\label{discontinuity}
	\Phi_n(z_i) = \left[c_i \, \phi(z) - d_i \, (p \, \phi')(z)\right] \quad \text{for } i\in S.
\end{equation}
Thus, while the solutions $\phi_n$ are continuously differentiable over the closed interval $[z_1,z_2]$, the eigenfunctions $\Phi_n$ are continuously differentiable over the open interval $(z_1,z_2)$ but generally have finite jump discontinuities at the $\lambda$-dependent boundaries.
The eigenfunctions $\Phi_n$ are continuous in the closed interval $[z_1,z_2]$ only if $c_i=1$ and $d_i=0$ for $i\in S$. In this case, the eigenfunctions $\Phi_n$ coincide with the solutions $\phi_n$ on the closed interval $[z_1,z_2]$.

The boundary conditions of the eigenvalue problem \eqref{EigenDiff}\textendash\eqref{EigenB2} are not unique. One can multiply each boundary condition by an arbitrary constant to obtain an equivalent problem. To uniquely specify the eigenfunctions in physical applications, the boundary coefficients $\{a_i,b_i,c_i,d_i\}$ of equations \eqref{EigenDiff}\textendash\eqref{EigenB2} must be chosen so that $r \, \mathrm{d}z$ has the same dimensions as $D_i^{-1} \, \delta(z-z_i) \, \mathrm{d}z$ [recall that $\delta(z)$ has the dimension of inverse length]. In the quasigeostrophic problem, we must also invoke continuity and set $c_i=1$.

Since $\{\Phi_n\}_{n=0}^\infty$ is a basis for $L^2_\mu$, then any $\Psi \in L^2_\mu$ may be expanded in terms of the eigenfunctions \citep[][thereom IV.3.4]{bognar_indefinite_1974},
\begin{equation}\label{expansion}
	\Psi = \sum_{n=0}^\infty \frac{\inner{\Psi}{\Phi_n}}{\inner{\Phi_n}{\Phi_n}} \, \Phi_n.
\end{equation}
We emphasize that the above equality is an equality in $L^2_\mu$ and not a pointwise equality [see the discussion following equation \eqref{L2_mu-element}]. Some properties of $L^2_\mu$ expansions are given in appendix \ref{S-more-eigen-expansions}.

An important property that distinguishes the basis $\{\Phi_n\}_{n=0}^\infty$ of $L^2_\mu$ from an $L^2$ basis is its ``sensitivity'' to function values at boundary points $z=z_i$ for $i\in S$. See \S \ref{S-step-forcing} for a physical application.

A natural question is whether the basis $\{\Phi_n\}_{n=0}^\infty$ of $L^2_\mu$ is also a basis of $L^2$. Recall that the set $\{\Phi_n\}_{n=0}^\infty$ is a basis of $L^2$ if every element $\psi \in L^2$ can be written \emph{uniquely} in terms of the functions $\{\Phi_n\}_{n=0}^\infty$. However, in general, this is not true. If $s>0$, the $L^2_\mu$ basis $\{\Phi_n\}_{n=0}^\infty$ is overcomplete in $L^2$ \citep{walter_regular_1973,russakovskii_matrix_1997}.

\subsubsection{Uniform convergence and term-by-term differentiability}\label{S-uniform}

Along with the eigenfunction expansion \eqref{expansion} in terms of the eigenfunctions $\{\Phi_n\}_{n=0}^\infty$, we also have the expansion
\begin{equation}\label{expansion-phi}
	\sum_{n=0}^\infty \frac{\inner{\Psi}{\Phi_n}}{\inner{\Phi_n}{\Phi_n}} \, \phi_n
\end{equation}
in terms of the solutions $\phi_n$. The two expansions differ in their behaviour at $\lambda$-dependent boundaries, $z=z_i$ for $i\in S$, but are otherwise equal. In particular, the $\Phi_n$ eigenfunction expansion \eqref{expansion} must converge to $\Psi(z_i)$ at $z=z_i$ for $i\in S$ as this equality is required for $\Psi$ to be equal to the series expansion \eqref{expansion} in $L^2_\mu$ [see the discussion following equation \eqref{L2_mu-element}]. Some properties of both expansions are given in appendix \ref{S-pointwise}. In particular, theorem \ref{pointwise} shows that the $\phi_n$ solution series \eqref{expansion-phi} does not generally converge to $\Psi(z_i)$ at $z=z_i$.

\begin{figure}
  \includegraphics[width=1.\columnwidth]{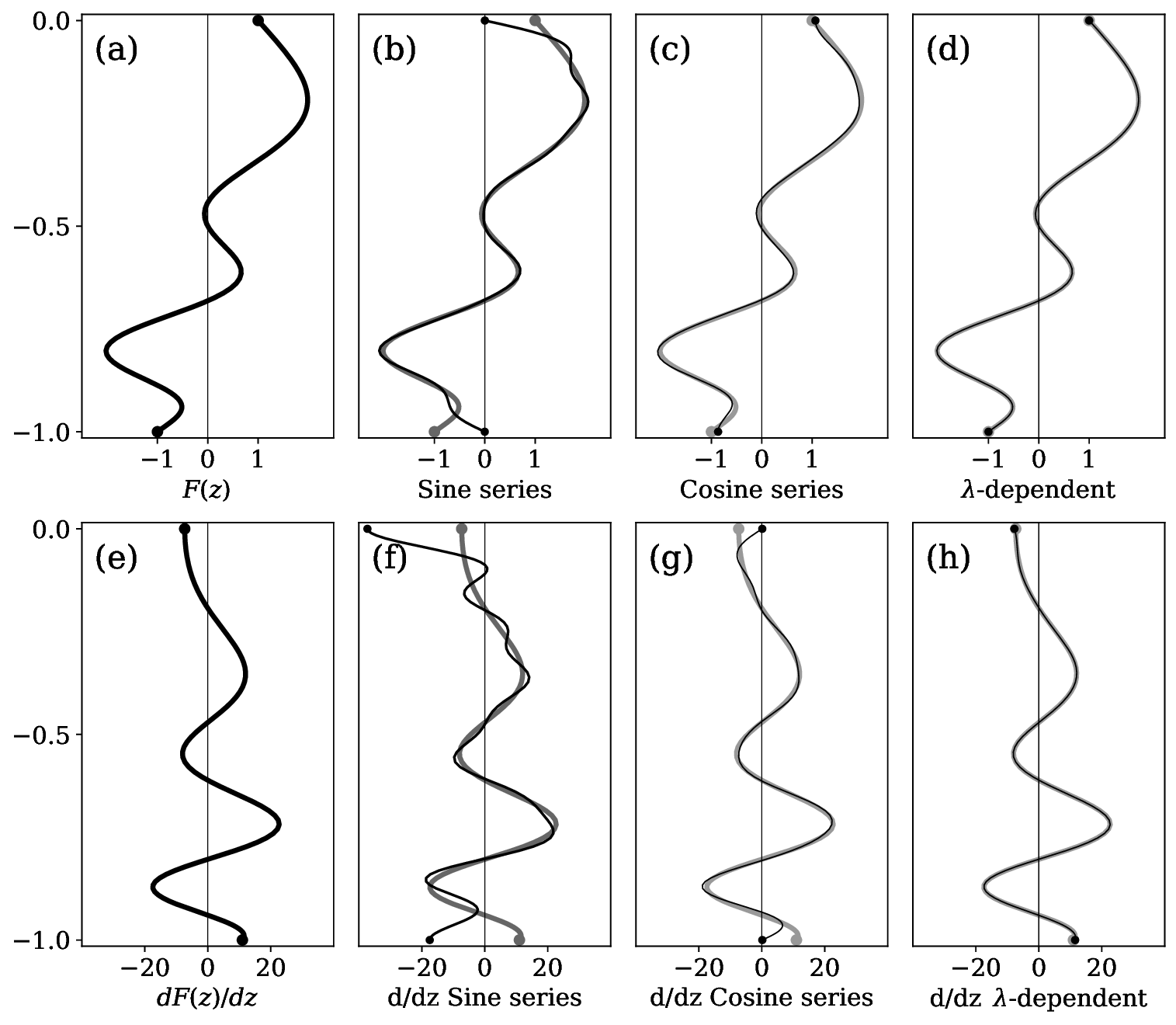}
  \caption{Convergence to a function $F(z) = 1 + 2z +(3/2)\sin(2\pi z)\cos(\pi^2z^2 + 3)$ for $z \in[-1,0]$, shown in panel (a), by various eigenfunction expansions of $-\phi''= \lambda \, \phi$ with fifteen terms, as discussed in \S \ref{S-uniform}. Panel (b) shows the Fourier sine expansion of $F$. Since the sine eigenfunctions vanish at the boundaries $z=-1,0$, the series expansion will not converge to $F$ at the boundaries. Panel (c) shows the cosine expansion of $F$ which converges uniformly to $F$ on the closed interval $[-1,0]$. Panel (d) shows an expansion with boundary coefficients in equations \eqref{EigenB1}\textendash\eqref{EigenB2} given by $(a_1,b_1,c_1,d_1) = (-0.5,-5,1,0)$ and $(a_2,b_2,c_2,d_2) =  (0.5,-5,1,0)$. Since the $c_i=1$ and $d_i=0$, then $\Phi_n = \phi_n$ and the series expansions \eqref{expansion} and \eqref{expansion-phi} coincide. As with the cosine series, the expansion converges uniformly to $F$ on $[-1,0]$. The derivative of $F$ is shown in panel (e). Panel (f) show the derivative of the sine series expansion. In panel (g), we show the differentiated cosine series which does not converge to the derivative $F'$ at the boundaries $z=z_1,z_2$. In contrast, in panel (h), the differentiated series obtained from a problem with $\lambda$-dependent boundary conditions converges uniformly to the derivative $F'$.}
  \label{F-convergence}
\end{figure}

The following theorem is of central concern for physical applications. 

\begin{theorem}[Uniform convergence]\label{uniform}
 Let $\psi$ be a twice continuously differentiable function on $[z_1,z_2]$ satisfying all $\lambda$-independent boundary conditions of the eigenvalue problem \eqref{EigenDiff}\textendash\eqref{EigenB2}. Define the function $\Psi$ on $z\in[z_1,z_2]$ by 
	\begin{equation}
	\Psi(z) = 
	\begin{cases}
		c_i \, \psi(z) - d_i \, (p \, \psi')(z) \quad & \textrm{at } z=z_i, \textrm{ for }\, i \in S, \\
		\psi(z) \quad & \textrm{otherwise}.
	\end{cases}
	\end{equation}
	Then 
	\begin{equation}\label{uniform-two-series}
		\psi(z) = \sum_{n=0}^\infty \frac{\inner{\Psi}{\Phi_n}}{\inner{\Phi_n}{\Phi_n}} \, \phi_n(z)  \quad \textrm{and} \quad \psi'(z) = \sum_{n=0}^\infty \frac{\inner{\Psi}{\Phi_n}}{\inner{\Phi_n}{\Phi_n}} \, \phi_n'(z) 
	\end{equation}
	with both series converging uniformly and absolutely on $[z_1,z_2]$.
\end{theorem}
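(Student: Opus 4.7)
The plan is to upgrade the $L^2_\mu$-convergence provided by theorem \ref{real-complete} to uniform absolute convergence on $[z_1,z_2]$ for both $\sum_n\alpha_n\phi_n(z)$ and its term-by-term derivative, where $\alpha_n = \inner{\Psi}{\Phi_n}/\inner{\Phi_n}{\Phi_n}$. I would proceed in three stages: (i) derive $O(1/\lambda_n)$ decay of the coefficients using the $C^2$-smoothness of $\psi$; (ii) collect uniform growth estimates for $\phi_n$ and $\phi_n'$; and (iii) conclude by the Weierstrass M-test and identify the limits.

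For stage (i), set $f = -(p\,\psi')' + q\,\psi$, which is continuous on $[z_1,z_2]$ by the hypotheses. Writing out $\lambda_n\intz \psi\,\phi_n\,r\,\mathrm{d}z = \intz \psi\,[-(p\,\phi_n')' + q\,\phi_n]\,\mathrm{d}z$ and integrating by parts twice produces $\intz f\,\phi_n\,\mathrm{d}z$ plus the concomitants $[(p\,\psi')\,\phi_n - (p\,\phi_n')\,\psi]_{z_1}^{z_2}$. At each endpoint $z_i$ with $i\notin S$, the concomitant vanishes because $\psi$ and $\phi_n$ satisfy the same $\lambda$-independent boundary condition. At each $z_i$ with $i\in S$, the $\lambda$-dependent boundary condition rewrites $\lambda_n\Phi_n(z_i) = -a_i\phi_n(z_i) + b_i(p\,\phi_n')(z_i)$; combining this with the boundary sum $\sum_{i\in S}D_i^{-1}\Psi(z_i)\Phi_n(z_i)$ in the definition \eqref{dmu_inner} of $\inner{\Psi}{\Phi_n}$ yields, after algebra using $D_i = (-1)^{i+1}(a_i d_i - b_i c_i)$,
\[
\lambda_n\,\inner{\Psi}{\Phi_n} = \intz f\,\phi_n\,\mathrm{d}z + \sum_{i\in S} K_i[\psi]\,\Phi_n(z_i),
\]
where $K_i[\psi]$ is a linear combination of $\psi(z_i)$ and $(p\,\psi')(z_i)$ independent of $n$. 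Bessel's inequality applied to $f\in L^2(r\,\mathrm{d}z)$ then gives an $\ell^2$-sequence $\{\beta_n\}$ with $\beta_n = \intz f\,\phi_n\,\mathrm{d}z/\sqrt{|\inner{\Phi_n}{\Phi_n}|}$, and together with the uniform bound on $\Phi_n(z_i)$ from stage (ii) this yields $|\alpha_n| \leq C|\beta_n|/|\lambda_n|$; the finitely many $\lambda_n\le 0$ contribute a harmless finite tail.

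For stage (ii), the eigenfunctions $\phi_n$ satisfy the interior ODE \eqref{EigenDiff}, a classical regular Sturm-Liouville equation with smooth real coefficients, so the Liouville/Pr\"ufer transformation gives $\lambda_n\sim Cn^2$ together with uniform bounds $|\phi_n(z)| = O(1)$ and $|\phi_n'(z)| = O(\sqrt{\lambda_n})$ on $[z_1,z_2]$, provided $\phi_n$ is normalized so that $\intz|\phi_n|^2 r\,\mathrm{d}z$ is of order unity. Since $\inner{\Phi_n}{\Phi_n} = \intz|\phi_n|^2r\,\mathrm{d}z + \sum_{i\in S}D_i^{-1}|\Phi_n(z_i)|^2$ is normalized to $\pm 1$ and the boundary contributions are themselves bounded, the $L^2$-normalization and the $L^2_\mu$-normalization differ by a bounded factor, so the same asymptotic bounds hold under the normalization used to define $\alpha_n$. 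Combining these estimates in stage (iii) gives M-test majorants $|\alpha_n\phi_n(z)| \leq C|\beta_n|/n^2$ and $|\alpha_n\phi_n'(z)| \leq C|\beta_n|/n$, and Cauchy-Schwarz against $\sum n^{-4}$ and $\sum n^{-2}$ produces uniform absolute convergence of both series on $[z_1,z_2]$. To identify the first limit, note that it is continuous on $[z_1,z_2]$ and equals $\Psi$ in $L^2_\mu$, so it agrees with $\psi$ on the interior, and continuity extends the equality to the endpoints. The uniformly convergent derivative series may be integrated term-by-term, which recovers $\sum_n\alpha_n\phi_n = \psi$; hence its uniform limit is $\psi'$.

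The principal obstacle is the bookkeeping in stage (i): verifying that the Green's-formula concomitants at $\lambda$-dependent boundaries, combined with the $\lambda_n D_i^{-1}\Psi(z_i)\Phi_n(z_i)$ terms in $\lambda_n\inner{\Psi}{\Phi_n}$, reduce to a bounded boundary functional of $\psi$. This is where the definition of $D_i$ and the precise form of the $\lambda$-dependent boundary conditions \eqref{EigenB1}\textendash\eqref{EigenB2} are essential; everything else is a standard argument once the coefficient decay and eigenfunction asymptotics are in hand.
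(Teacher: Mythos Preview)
Your three-stage strategy is sound and arrives at the result, but it is a genuinely different argument from the one the paper actually gives. The paper (appendix \ref{S-A-left-fulton}, following Fulton and Hinton) does \emph{not} use Liouville/Pr\"ufer asymptotics for $\phi_n$ and $\phi_n'$. Instead it invokes the resolvent: from the identity $\phi_n(z)=(\lambda-\lambda_n)\inner{G(z,\cdot,\lambda)}{\Phi_n}$ one obtains, via Parseval applied to $G(z,\cdot,\lambda)\in L^2_\mu$, the uniform-in-$z$ bound $\sum_n|\phi_n(z)|^2/|\lambda-\lambda_n|^2\le B_1(\lambda)$, and the analogous bound for $\phi_n'$ from $\partial_z G$. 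Your stage~(i) and the paper's coefficient estimate coincide in substance---both amount to $\Psi\in D(\Lc)$ and hence $\sum_n\lambda_n^2|\inner{\Psi}{\Phi_n}|^2<\infty$---and both finish with Cauchy--Schwarz. What the Green's-function route buys is that it entirely sidesteps the delicate points you flag in stage~(ii): no need to establish $\lambda_n\sim Cn^2$, $|\phi_n|=O(1)$, $|\phi_n'|=O(\sqrt{\lambda_n})$ for a problem whose boundary conditions themselves depend on $\lambda_n$, and no need to argue that the $L^2_\mu$- and $L^2(r\,\mathrm{d}z)$-normalizations differ only by a bounded factor. Your route is more elementary and transparent for readers versed in classical Sturm--Liouville asymptotics, but those asymptotics for $\lambda$-dependent boundary problems are themselves a nontrivial input (they are in Fulton's paper, so citable, but not automatic).

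Two small repairs to your write-up. First, in stage~(i) you cannot invoke Bessel's inequality in $L^2(r\,\mathrm{d}z)$ for the sequence $\intz f\,\phi_n\,\mathrm{d}z$, because the $\phi_n$ are \emph{not} orthogonal in $L^2(r\,\mathrm{d}z)$; orthogonality holds only in $L^2_\mu$. The clean statement is that your displayed identity is exactly $\lambda_n\inner{\Psi}{\Phi_n}=\inner{\Lc\Psi}{\Phi_n}$, and then Parseval in $L^2_\mu$ (or in the induced Hilbert space $(\Pi,\inner{\cdot}{\cdot}_+)$ in the left-definite case) gives the needed $\ell^2$ bound directly. Second, ``uniform bound on $\Phi_n(z_i)$'' is too coarse: if $\Phi_n(z_i)$ were merely $O(1)$, the boundary contribution to $|\alpha_n\phi_n'|$ would be $O(1/n)$ and the derivative series would fail the M-test. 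In fact $\Phi_n(z_i)=-\Bc_i\phi_n/\lambda_n$ decays (like $\lambda_n^{-1}$ when $d_i\neq0$, like $\lambda_n^{-1/2}$ when $d_i=0$), which is what makes your estimate go through; this should be stated rather than absorbed into ``bounded''.
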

\begin{proof}
	See appendix \ref{S-A-left-fulton}.
\end{proof}

If $c_i=1$ and $d_i=0$ for $i\in S$ then we can replace $\Phi_n$ by $\phi_n$ and $\Psi$ by $\psi$ in equation \eqref{uniform-two-series}.

In addition, if both boundary conditions of the eigenvalue problem \eqref{EigenDiff}\textendash\eqref{EigenB2} are $\lambda$-dependent, then both  expansions in equation \eqref{uniform-two-series} converge uniformly on $[z_1,z_2]$ regardless of the boundary conditions $\psi$ satisfies. As discussed in appendix \ref{S-pointwise}, for traditional Sturm-Liouville expansions, an analogous result holds only if $\psi$ satisfies the same boundary conditions as the eigenfunctions. Figure \ref{F-convergence} contrasts the convergence behaviour of such a problem (with continuous eigenfunctions, so $c_i=1$ and $d_i=0$ for $i\in S$) with the convergence behaviour of sine and cosine series. All numerical solutions in this article are obtained using a pseudo-spectral code in Dedalus \citep{burns_dedalus_2020}. 

\begin{figure}
  \centerline{\includegraphics[width=0.8\columnwidth]{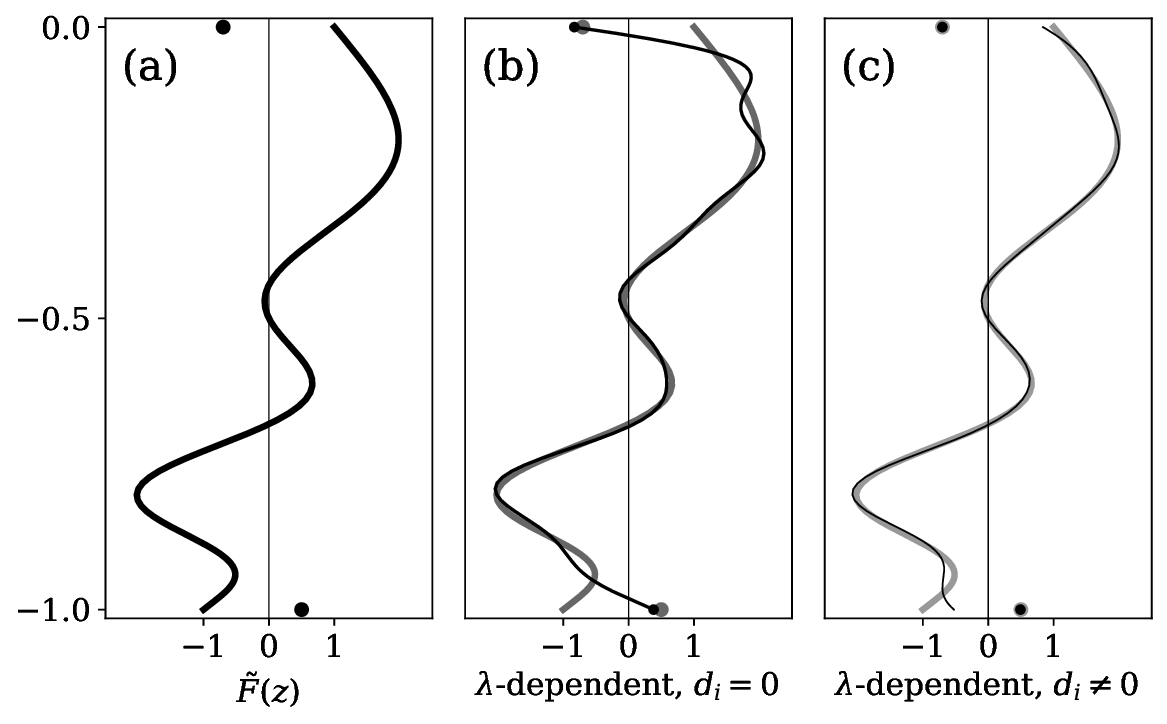}}
  \caption{Convergence to a function $\tilde F$ with finite jump discontinuities at the boundaries by two eigenfunction expansions (with $\lambda$-dependent boundary conditions) of $-\phi'' = \lambda\,\phi$ with fifteen terms, as discussed in \S \ref{S-uniform}. The function $\tilde F(z)$ is defined by $\tilde F(z) = F(z)$ for $z\in(z_1,z_2)$ where $F(z)$ is the function defined in figure \ref{F-convergence}, $F(-1) = 0.5$ at the lower boundary, and $F(0)=-0.7$ at the upper boundary. The function $\tilde F$ is shown in panel (a). In panel (b), the boundary coefficients in equations \eqref{EigenB1}\textendash\eqref{EigenB2} are given by  $(a_1,b_1,c_1,d_1) = (-0.5,-5,1,0)$ and $(a_2,b_2,c_2,d_2) =  (0.5,-5,1,0)$ as in figure \ref{F-convergence}. In panel (c), the boundary coefficients are  $(a_1,b_1,c_1,d_1) = (-0.5,-5,1,0.1)$ and $(a_2,b_2,c_2,d_2) =  (0.5,-5,1,-0.1)$. The $\Phi_n$ expansion \eqref{expansion} and the $\phi_n$ expansion \eqref{expansion-phi} are not generally equal at the boundaries $z=-1,0$; this figure shows the $\Phi_n$ expansion. The $\Phi_n$ series \eqref{expansion} converges pointwise to $\tilde F$ on $[-1,0]$, however, the convergence will not be uniform if $d_i=0$ for $i\in S$, as in panel (b). The boundary values of the $\Phi_n$ series \eqref{expansion} are shown with a black dot. In panel (b), the eigenfunctions $\Phi_n$ are continuous and a large number of terms are required for the series to converge to the discontinuous function $\tilde F$. Panel (c) shows that the discontinuous eigenfunction $\Phi_n$ have almost converged to the $\tilde F$\textemdash including at the jump discontinuities; the black dot in panel (c) overlap with the grey dots, which represent the boundary values of $\tilde F$. Although the $\phi_n$ series \eqref{expansion-phi} converges to $\tilde F$ in the interior $(-1,0)$, the $\phi_n$ series does not generally converge to $\tilde F$ at the boundaries but instead converges to the values given in theorem \ref{pointwise}.} 
  \label{F-convergence-disc}
\end{figure}

Another novel property of the eigenfunction expansions is that we obtain pointwise convergence to functions that are smooth in the interior of the interval, $(z_1,z_2)$, but have finite jump discontinuities at $\lambda$-dependent boundaries (see appendix \ref{S-pointwise}). If $d_i\neq 0$ for $i\in S$, the convergence is even uniform \cite[][corollary 2.1]{fulton_two-point_1977}. Figure \ref{F-convergence-disc} illustrates the convergence behaviour for eigenfunction expansions with $\lambda$-dependent boundary conditions in the two cases $d_i=0$ and $d_i\neq 0$. Note the presence of Gibbs-like oscillations in the case $d_i=0$ shown in panel (b). Although the $\Phi_n$ eigenfunction series \eqref{expansion} converges pointwise to the discontinuous function, the $\phi_n$ solution series \eqref{expansion-phi} converges to the values given in theorem \ref{pointwise} at the $\lambda$-dependent boundaries. The ability of these series expansions to converge to functions with boundary jump discontinuities is related to their ability to expand distributions in the \citet{bretherton_critical_1966} ``$\delta$-function formulation'' of a problem.

\subsection{Oscillation theory}\label{S-oscillation}

Recall that for regular Sturm-Liouville problems [i.e., equations \eqref{EigenDiff}\textendash\eqref{EigenB2} with $c_i=d_i=0$] we obtain a countable infinity of real simple eigenvalues, $\lambda_n$, that may be ordered as 
\begin{equation}\label{eig-seq}
	\lambda_0 < \lambda_1 < \lambda_2 < \dots \rightarrow \infty,
\end{equation}
with associated eigenfunctions $\phi_n$. The $n$th eigenfunction $\phi_n$ has $n$ internal zeros in the interval $(z_1,z_2)$ so that no two eigenfunctions have the same number of internal zeros.

However, once the eigenvalue $\lambda$ appears in the boundary conditions, there may be up to $s+1$ linearly independent eigenfunctions with the same number of internal zeros. The crucial parameters deciding the number of zeros is $-b_i/d_i$ for $i\in S$, where $b_i$ and $d_i$ are the boundary coefficients appearing in the boundary conditions \eqref{EigenB1}\textendash\eqref{EigenB2}. The following lemma outlines the possibilities when only one boundary condition is $\lambda$-dependent.

\begin{lemma}[Location of double oscillation count]\label{extra-oscillation}

Suppose that $s=1$, $i\in S$, and let $\kappa$ be the number of negative $D_i$ for the eigenvalue problem \eqref{EigenDiff}\textendash\eqref{EigenB2}. We have the following possibilities.
\begin{itemize}
	\item[(i)] Right-definite, $d_i \neq 0$: The eigenfunction $\Phi_n$ corresponding to the eigenvalue $\lambda_n$ has $n$ internal zeros if $\lambda_n < -b_i/d_i$ and $n-1$ internal zero if $-b_i/d_i \leq \lambda_n$.
	\item[(ii)] Right-definite, $d_i = 0$:  The $n$th eigenfunction has $n$ internal zeros.
	\item[(iii)] Left-definite: If $\kappa=0$ then all eigenvalues are positive, the problem is right-definite, and  either (i) or (ii) applies.
	Otherwise, if $\kappa=1$, then the eigenvalues may be ordered as 
	\begin{equation}
		\lambda_0 < 0 < \lambda_1 < \lambda_2 < \dots  \rightarrow \infty.
	\end{equation}
	Both eigenfunctions $\Phi_0$ and $\Phi_1$ have no internal zeros. The remaining eigenfunctions $\Phi_n$, for $n>1$, have $n-1$ internal zeros.
\end{itemize}
\end{lemma}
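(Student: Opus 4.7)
The plan is to translate the ODE \eqref{EigenDiff} into Prüfer form $\phi(z) = \rho(z)\sin\theta(z)$, $(p\phi')(z) = \rho(z)\cos\theta(z)$, which turns the problem into a monotone flow for the angle $\theta(z;\lambda)$ and reduces the zero-counting question to tracking the multiples of $\pi$ swept by $\theta$ in $(z_1,z_2)$. The $\lambda$-independent endpoint $z_j$, $j \in \{1,2\}\setminus S$, pins $\theta(z_j;\lambda)$ at a $\lambda$-independent value, while the $\lambda$-dependent boundary condition at $z_i$ becomes, after dividing through,
\begin{equation*}
\tan\theta(z_i;\lambda) = \frac{b_i + \lambda d_i}{a_i + \lambda c_i}.
\end{equation*}
I would write $\beta(\lambda)$ for a continuous branch of the right-hand arctangent and set $F(\lambda) := \theta(z_i;\lambda) - \beta(\lambda)$; then the eigenvalues are exactly the $\lambda$ for which $F(\lambda) \equiv 0 \pmod \pi$, and the internal zero count of the associated eigenfunction can be read off from $\lfloor F(\lambda)/\pi\rfloor$ up to a fixed offset. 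A short computation of $\beta'(\lambda)$ shows its sign is that of $(-1)^{i+1}D_i$, so the direction of rotation of $\beta$ is governed by $D_i$ and by the endpoint $i$.

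My first case would be right-definite. With the convention above, $\beta$ rotates so that $F$ is strictly monotone increasing, hence $F = n\pi$ is met exactly once for each integer $n$. When $d_i \neq 0$, the total variation of $\beta$ over $\R$ will be exactly $\pi$, with the branch passing through $0 \pmod \pi$ at the single point $\lambda = -b_i/d_i$\textemdash the value at which the solution vanishes at $z_i$. Tracking this crossing shows that the zero count drops by one at precisely the smallest $\lambda_n$ exceeding $-b_i/d_i$, giving (i). When $d_i = 0$, the total variation of $\beta$ over $\R$ will be strictly less than $\pi$ and no $\pi$-shift ever occurs, so the classical Sturm count of $n$ internal zeros for $\lambda_n$ is preserved, giving (ii).

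For the left-definite case the $\kappa = 0$ subcase reduces immediately to right-definite via Proposition \ref{right-definite-prop} and inherits (i) or (ii). The novel situation is $\kappa = 1$: here $D_i < 0$, so $\beta$ rotates in the same direction as $\theta(z_i;\cdot)$ and $F$ need no longer be monotone. My plan is to show that $F$ is monotone outside a compact $\lambda$-interval\textemdash strictly decreasing at first and then strictly increasing\textemdash using that $|\beta'(\lambda)| = O(\lambda^{-2})$ while $\partial_\lambda \theta(z_i;\lambda)$ stays bounded below by a positive constant for large $|\lambda|$, the latter obtained by integrating the Prüfer monotonicity identity against $r > 0$. This shape will yield exactly two solutions of $F = 0$, namely $\lambda_0 < 0 < \lambda_1$ (the sign pattern matching Proposition \ref{eigenvalue-sign}), both giving eigenfunctions with no internal zeros; each level $F = n\pi$ with $n \geq 1$ is then met exactly once at $\lambda_{n+1}$, giving $n-1$ internal zeros for $\Phi_n$ when $n > 1$.

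The hard part will be the branch-bookkeeping. In case (i) I must confirm that the single $\pi$-rotation of $\beta$ over $\R$ induces exactly one drop in the zero count and that this drop is placed at the threshold $\lambda = -b_i/d_i$ rather than at a neighbouring eigenvalue; and in case (iii) I must show that the non-monotone part of $F$ intersects the level $F = 0$ exactly twice rather than more or fewer times. Both tasks require a quantitative balance between the explicit form of $\beta'$ and the interior contribution to $\partial_\lambda\theta(z_i;\cdot)$ rather than just asymptotic comparisons, and this is where I expect the real work to sit.
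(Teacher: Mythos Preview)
The paper does not actually prove this lemma: its proof is a bare citation of Linden (1991) for (i), Binding et al.\ (1994) for (ii), and Binding et al.\ (1999) for (iii). Your Pr\"ufer-angle strategy is exactly the machinery those references employ, so you are reconstructing the cited arguments rather than proposing an alternative route.

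Your treatment of (i) and (ii) is essentially right, with one slip: in (ii) the total variation of $\beta$ over $\R$ is $\pi$, not strictly less, since $\lambda\mapsto b_i/(a_i+\lambda c_i)$ sweeps $\tan\beta$ through all of $\R\cup\{\infty\}$. What actually matters is that $\beta$ never \emph{attains} a multiple of $\pi$, because $\tan\beta=0$ would force $b_i=0$ and hence $D_i=0$.

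Part (iii) contains a genuine gap. Your claim that $\partial_\lambda\theta(z_i;\lambda)$ is bounded below by a positive constant for large $|\lambda|$ is false as $\lambda\to-\infty$: the variational identity $\partial_\lambda\theta(z_i;\lambda)=\rho(z_i;\lambda)^{-2}\int r\,\phi^2\,\mathrm{d}z$ together with a WKB estimate in the non-oscillatory regime gives a decay like $|\lambda|^{-3/2}$. Independently, your own premise would make $F'=\partial_\lambda\theta-\beta'$ eventually positive on \emph{both} ends of the real line, which contradicts the ``strictly decreasing at first and then strictly increasing'' shape you announce; the argument is internally inconsistent as written. More fundamentally, an asymptotic shape analysis of $F$ cannot by itself pin down that the zero-internal-zero level is hit exactly twice and each higher level exactly once, since nothing rules out additional wiggles of $F$ in the compact middle region. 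The cited left-definite argument uses the hypothesis $Q(\Phi,\Phi)\ge 0$ substantively at this step, not merely as post-hoc bookkeeping via Proposition~\ref{eigenvalue-sign}. You have correctly located where the difficulty lies but not yet the mechanism that resolves it.
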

\begin{proof}
	Parts (i), (ii) and (ii) are due to \cite{linden_leightons_1991}, \cite{binding_sturmliouville_1994}, and \cite{binding_left_1999}, respectively.
\end{proof}

When both boundary conditions are $\lambda$-dependent, the situation is similar. See \cite{binding_sturmliouville_1994} and \cite{binding_left_1999} for further discussion.

\section{Boussinesq gravity-capillary waves} \label{S-Boussinesq}

Consider a rotating Boussinesq fluid on an $f$-plane with a reference Boussinesq density of $\rho_0$. The fluid is subject to a constant gravitational acceleration $g$ in the downwards, $-\unit z$, direction, and to a surface tension $T$ \citep[with dimensions of force per unit length, see][]{lamb_hydrodynamics_1975} at its upper boundary. The upper boundary of the fluid, given by $z=\eta$, is a free-surface defined by the function $\eta(\vec x,t)$, where $\vec x = \unit x \, x + \unit y \, y$ is the horizontal position vector. The lower boundary of the fluid is a flat rigid surface given by $z=-H$. The fluid region is periodic in both horizontal directions $\unit x$ and $\unit y$. 

\subsection{Linear equations of motion}

The governing equations for infinitesimal perturbations about a background state of no motion, characterized by a prescribed background density of $\rho_B = \rho_B(z)$, are 
	\begin{align} \label{grav-cap-1}
	\partial_t^2  \lap  w + f_0^2 \, \partial_z^2  w + N^2 \, \lap_z w = 0 \quad &\text{for } z \in \left(-H,0\right) \\
	\label{grav-cap-2}
	w = 0 \quad &\text{for } z=-H \\
	\label{grav-cap-3}
	-\partial_t^2 \partial_z w - f_0^2 \, \partial_z w + g_b \, \lap_z  w - \tau \, \laptwo_z  w = 0 \quad &\text{for } z=0,
\end{align}
where $w$ is the vertical velocity, $f_0$ is the constant value of the Coriolis frequency, the prescribed buoyancy frequency $N^2$ is given by
\begin{equation}\label{N2}
	N^2(z) = - \frac{g}{\rho_0} \d{\rho_B(z)}{z},
\end{equation}
the acceleration $g_b$ is the effective gravitational acceleration at the upper boundary
\begin{equation}
	g_b = -\frac{g}{\rho_0} \left[\rho_a - \rho_B(0-)\right]
\end{equation}
where $\rho_a$ is the density of the overlying fluid, and the parameter $\tau$ is given by
\begin{equation}
	\tau = \frac{T}{\rho_0}
\end{equation}
where $T$ is the surface tension. The three-dimensional Laplacian is denoted $\lap = \partial_x^2 + \partial_y^2 + \partial_z^2$, the horizontal Laplacian is denoted by $\lap_z = \partial_x^2 + \partial_y^2$, and the horizontal biharmonic operator is given by $\laptwo_z = \lap_z \, \lap_z$. See equation (1.37) in \cite{dingemans_water_1997} for the surface tension term in \eqref{grav-cap-3}. The remaining terms in equation \eqref{grav-cap-1}\textendash\eqref{grav-cap-3} are standard \citep{gill_atmosphere-ocean_2003}. Consistent with our assumption that $\eta(\vec x,t)$ is small, we evaluate the upper boundary condition at $z=0$ in equation \eqref{grav-cap-3}.
 
\subsection{Non-rotating Boussinesq fluid}\label{S-Boussinesq-non-rot} 

\begin{figure}
  \centerline{\includegraphics[width=1.\columnwidth]{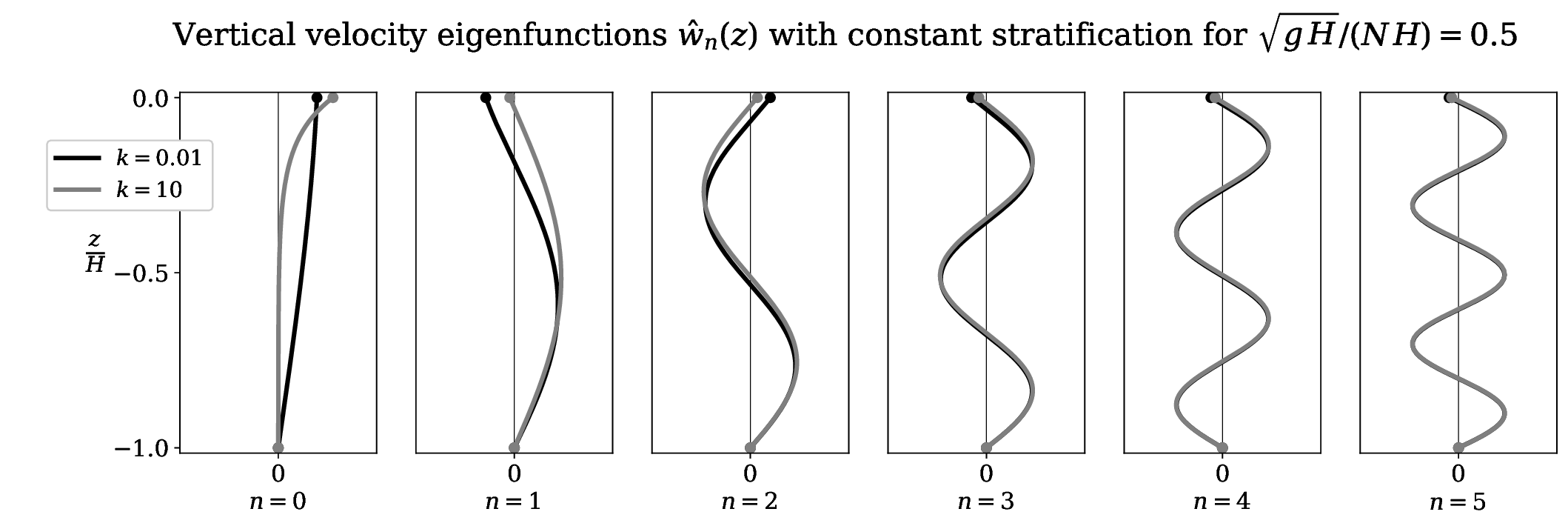}}
  \caption{The vertical velocity eigenfunctions $\hat W_n = \hat w_n$ of the non-rotating Boussinesq eigenvalue problem \eqref{non-rot-SL1}\textendash\eqref{non-rot-SL3} for two distinct wavenumbers with constant stratification, as discussed in \S \ref{S-Boussinesq-non-rot}. For both wavenumbers, the $n$th eigenfunction has $n$ internal zeros as in regular Sturm-Liouville theory. The zeroth mode ($n=0$) corresponds to a surface gravity wave and is trapped to the upper boundary for large horizontal wavenumbers. In contrast to the internal wave problem with a rigid lid, the modes $\hat w_n$ now depend on the horizontal wavenumber $k$ through the boundary condition \eqref{non-rot-SL3}, however, this dependence is weak for $n\gg1$, as can be observed in this figure; for $n>2$, the modes for $k=0.01$ (in black)  and for $k=10$ (in grey) nearly coincide. The horizontal wavenumbers $k$ are non-dimensionalized by $H$.}
  \label{F-nonrot-eigenfunctions}
\end{figure}

We assume wave solutions of the form
\begin{equation}\label{w-wave}
	w(\vec x, z ,t) = \hat w(z) \, \mathrm{e}^{\mathrm{i} \left( \vec k \cdot \vec x - \omega t\right)}
\end{equation}
where $\vec k = \unit x \, k_x + \unit y \, k_y$ is the horizontal wavevector and $\omega$ is the angular frequency. Substituting the wave solution \eqref{w-wave} into equations \eqref{grav-cap-1}\textendash\eqref{grav-cap-3} and setting $f_0=0$ yields
\begin{align}
	\label{non-rot-SL1}
	-\hat w'' + k^2 \, \hat w = \sigma^{-2} \, N^2 \, \hat w \quad &\text{for } z \in (-H,0) \\
	\label{non-rot-SL2}
	\hat w = 0 \quad &\text{for } z = -H \\
	\label{non-rot-SL3}
	(g_b + \tau \, k^2)^{-1} \hat w ' =  \sigma^{-2} \, \hat w \quad &\text{for } z = 0,
\end{align}
where $\sigma = \omega/k$ is the phase speed and $k = \abs{\vec k}$ is the horizontal wavenumber. Equations \eqref{non-rot-SL1}\textendash\eqref{non-rot-SL3} are an eigenvalue problem for the eigenvalue $\lambda = \sigma^{-2}$.

\subsubsection*{Definiteness \& the underlying function space}

Equations \eqref{non-rot-SL1}\textendash\eqref{non-rot-SL3} form an eigenvalue problem with one $\lambda$-dependent boundary condition, namely, the upper boundary condition \eqref{non-rot-SL3}. The underlying function space is then
\begin{equation} \label{gravity-functionspace}
	L^2_\mu \cong L^2 \oplus \C.
\end{equation}
We write $\hat W_n$ for the eigenfunctions and $\hat w_n$ for the solutions of the eigenvalue problem \eqref{non-rot-SL1}\textendash\eqref{non-rot-SL3} [see the paragraph containing equation \eqref{discontinuity}]. The eigenfunctions $\hat W_n$ are related to the solutions $\hat w_n$ by equation \eqref{L2_mu-element} with boundary values $\hat W_n(0)$ given by equation \eqref{discontinuity}. However, since $c_2 = 1$ and $d_2=0$ in equation \eqref{non-rot-SL3} [compare with equations \eqref{EigenDiff}\textendash\eqref{EigenB2}] then $\hat W_n = \hat w_n$ on the closed interval $[-H,0]$; thus, the solutions $w_n$ are also the eigenfunctions.

By theorem \ref{real-complete}, the eigenfunctions $\{\hat w_n \}_{n=0}^\infty$ form an  orthonormal basis of $L^2_\mu$. For functions $\varphi$ and $\phi$, the inner product is
\begin{align}\label{grav-inner}
	\inner{\varphi}{\phi} &= \frac{1}{N_0^2 \, H} \left[ \int_{-H}^0 \varphi \, \phi \, N^2 \, \textrm{d}z + (g_b + \tau \, k^2) \varphi(0) \, \phi(0)\right]
\end{align}
obtained from equations \eqref{dmu_inner} and equation \eqref{Di}; we have introduced the factor $1/(N^2_0\, H)$ in the above expression for dimensional consistency in eigenfunction expansions ($N^2_0$ is a typical value of $N^2$). Orthonormality is then given by 
\begin{align}\label{non-rot-ortho}
	\delta_{mn} &= \inner{\hat w_m}{\hat w_n}
\end{align}
and we have chosen the solutions $\hat w_n$ to be non-dimensional (so the Kronecker delta is non-dimensional as well).

One verifies that the eigenvalue problem \eqref{non-rot-SL1}\textendash\eqref{non-rot-SL3} is right-definite using proposition \ref{right-definite-prop} and left-definite using proposition \ref{left-definite-prop}.
Right-definiteness implies that $L^2_\mu$, with the inner product \eqref{grav-inner}, is a Hilbert space. That is, all eigenfunctions $\hat w_n$ satisfy
\begin{equation}
	\inner{\hat w_n}{\hat w_n} > 0.
\end{equation} 
Left-definiteness, along with proposition \ref{eigenvalue-sign}, ensures that all eigenvalues $\lambda_n = \sigma_n^{-2}$ are positive. Indeed, the phase speeds $\sigma_n$ satisfy
\begin{equation}\label{grav-phase-speeds}
	\sigma_0^2 > \sigma_1^2 > \dots > \sigma^2_n > \dots  \rightarrow 0.
\end{equation}

\subsubsection*{Properties of the eigenfunctions}

By lemma \ref{extra-oscillation}, the $n$th eigenfunction $\hat w_n$ has $n$ internal zeros in the interval $(-H,0)$. See figure \ref{F-nonrot-eigenfunctions} for an illustration of the first six eigenfunctions.

The eigenfunctions $\{\hat w_n \}_{n=0}^\infty$ are complete in $L^2$ but do not form a basis in $L^2$; in fact, the basis is overcomplete in $L^2$. The presence of a free-surface provides an additional degree of freedom over the usual  rigid-lid $L^2$ basis of internal wave eigenfunctions. Indeed, the $n=0$ wave in figure \ref{F-nonrot-eigenfunctions} corresponds to a surface gravity wave, while the remaining modes are internal gravity waves (with some surface motion).

\subsubsection*{Expansion properties}

Given a twice continuously differentiable function $\chi (z)$ satisfying $\chi(-H)=0$, then, from theorem \ref{uniform}, we have
\begin{equation}
	\chi(z) = \sum_{n=0}^\infty \inner{\chi}{\hat w_n} \, \hat w_n(z) \quad \textrm{and} \quad \chi'(z) = \sum_{n=0}^\infty \inner{\chi}{\hat w_n}\, w_n'(z),
\end{equation}
with both series converging uniformly on $[-H,0]$ (note that $\chi$ is not required to satisfy any particular boundary condition at $z=0$). If $\chi$ is the vertical structure at time $t=0$ (and at some wavevector $\vec k$) and we assume $\partial_t w(\vec x,z,t=0)=0$, then the subsequent time-evolution is given by
\begin{equation}
	 w(\vec x, z,t) = \sum_{n=0}^\infty \inner{\chi}{\hat w_n} \, w_n(z) \, \cos\left(\sigma_n k t\right) \mathrm{e}^{\mathrm{i}\vec k \cdot \vec x} .
\end{equation}

\subsubsection*{The $f$-plane hydrostatic problem}

Suppose we have hydrostatic gravity waves on an $f$-plane with free surface at the upper boundary, as in \cite{kelly_vertical_2016}. The appropriate inner product is obtained by setting $\tau = 0$ in the inner product \eqref{grav-inner}. All the above results on the eigenfunctions of gravity-capillary waves carry over to the hydrostatic $f$-plane problem provided we set 
\begin{equation}
    \sigma^2 = \frac{\omega^2 - f_0^2}{k^2}.
\end{equation}

\subsection{A Boussinesq fluid with a rotating upper boundary} \label{S-Boussinesq-rot}

\begin{figure}
  \centerline{\includegraphics[width=1.\columnwidth]{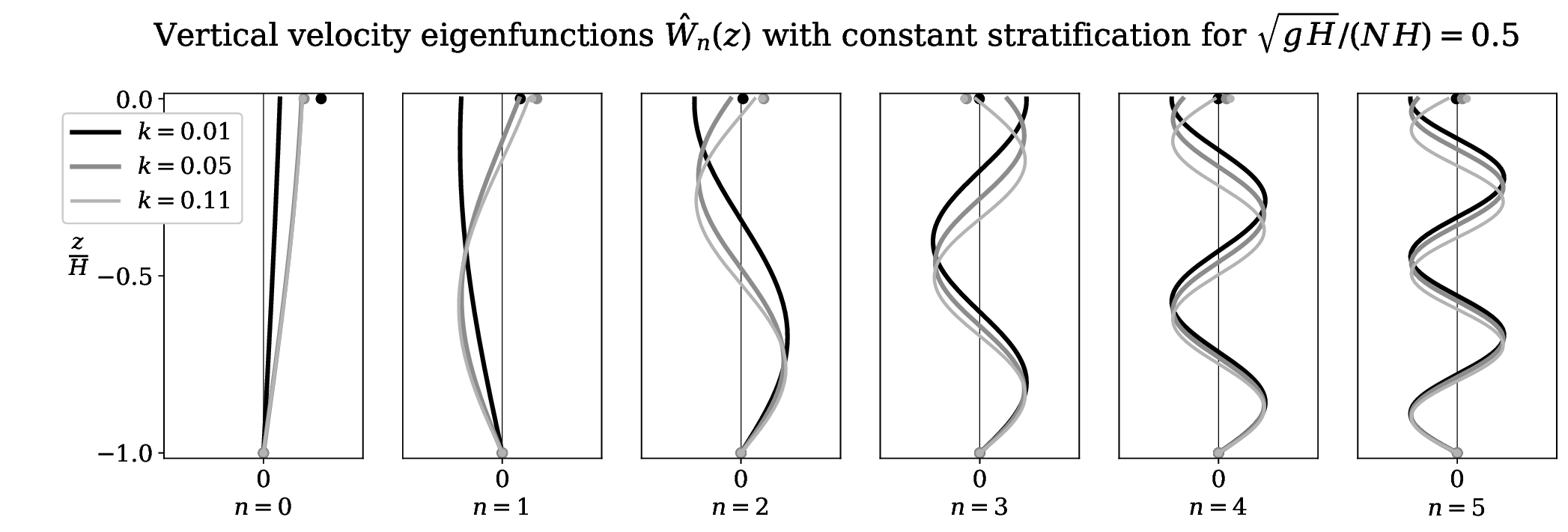}}
  \caption{The vertical velocity eigenfunctions $\hat W_n$ of a Boussinesq fluid with a rotating upper boundary\textemdash eigenvalue problem \eqref{rot-bound-SL1}\textendash\eqref{rot-bound-SL3}. This figure is discussed in \S \ref{S-Boussinesq-rot}. The wavenumbers $k$ in the figure are non-dimensionalized by the depth $H$. The dots represent the values of the eigenfunctions at the boundaries. Note that the eigenfunctions have a finite jump discontinuity at $z=0$. For $k \, H=0.01$ (given by the black line) there are two modes with no internal zeros. As $k$ increases, we obtain two modes with one internal zero (at $k\, H =0.05$, the thick grey line) and then two modes with three internal zeros (at $k\, H=0.11$, the thin grey line).}
  \label{F-rotating-eigenfunctions}
\end{figure}
Although this next example is not geophysically relevant, it has the curious property that the resulting eigenfunctions are discontinuous.

Let $N^2_0$ be a typical value of $N^2(z)$. Consider the situation where $f_0^2/N_0^2 \ll 1$ but
\begin{equation}\label{rot-grav-scaling}
	\frac{g_b + \tau \, k^2}{f_0^2 \, H} \sim O(1).
\end{equation}
Accordingly, we may neglect the Coriolis parameter in the interior equation \eqref{grav-cap-1} but not at the upper boundary condition \eqref{grav-cap-3}. Substituting the wave solution \eqref{w-wave} into equations \eqref{grav-cap-1}\textendash\eqref{grav-cap-3} yields
\begin{align}
	\label{rot-bound-SL1}
	-\hat w'' + k^2 \, \hat w = \sigma^{-2} \, N^2 \, \hat w \quad &\text{for } z \in (-H,0) \\
	\label{rot-bound-SL2}
	\hat w = 0 \quad &\text{for } z = -H \\
	\label{rot-bound-SL3}
	(g_b + \tau \, k^2)^{-1} \hat w ' =  \sigma^{-2} \left[ \hat w  + \frac{f_0^2}{k^2} (g_b + \tau \, k^2)^{-1} \hat w' \right]\quad &\text{for } z = 0,
\end{align}
where $\sigma=\omega/k$ is the phase speed. Equations \eqref{rot-bound-SL1}\textendash\eqref{rot-bound-SL3} form an eigenvalue problem for the eigenvalue $\lambda = \sigma^{-2}$.

\subsubsection*{Definiteness \& the underlying function space}

As in the previous case, the eigenvalue problem is both right-definite and left-definite, the underlying function space $L^2_\mu$ is given by equation \eqref{gravity-functionspace}, and the appropriate inner product is equation \eqref{grav-inner}.  By right-definiteness, the space $L^2_\mu$, equipped with the inner product \eqref{grav-inner}, is a Hilbert space; thus, all eigenfunctions $\hat W_n$ satisfy
 \begin{equation}
 	\inner{\hat W_m}{\hat W_n} > 0.
 \end{equation}
 By theorem \ref{real-complete}, all eigenvalues $\lambda_n = \sigma_n^{-2}$ are real and the corresponding eigenfunctions $\{\hat W_n\}_{n=0}^\infty$ form an orthonormal basis of the Hilbert space $L^2_\mu$. By proposition \ref{eigenvalue-sign}, all eigenvalues $\lambda_n = \sigma_n^{-2}$ are positive and satisfy equation \eqref{grav-phase-speeds}.

\subsubsection*{Boundary jump discontinuity of the eigenfunctions}

The main difference between the previous non-rotating problem \eqref{non-rot-SL1}\textendash\eqref{non-rot-SL3} and the above problem \eqref{rot-bound-SL1}\textendash\eqref{rot-bound-SL3} is that, in the present problem, if $f_0 \neq 0$ then $d_2 \neq 0$ [see equation \eqref{EigenB2}]. Thus, by equation \eqref{discontinuity}, the eigenfunctions $\hat W_n$ generally have a jump discontinuity at the upper boundary $z=0$ (see figure \ref{F-rotating-eigenfunctions}) and so are not equal to the solutions $\hat w_n$. The eigenfunctions $\hat W_n$ are defined by $\hat W_n(z) = \hat w_n(z)$ for $z \in[-H,0)$ and
\begin{equation}
	\hat W_n(0) = \hat w_n(0) +  \frac{f_0^2}{k^2} (g_b + \tau \, k^2)^{-1} \, \hat w_n'(0)
\end{equation}
[see equation \eqref{discontinuity}]. It is not difficult to show that
\begin{equation}
	\hat W_n(0) \approx 0 \quad \text{ for } n \text{ sufficiently large,}
\end{equation}
as can be seen in figure \ref{F-rotating-eigenfunctions}. 

Physical motion is given by the solutions $\hat w_n$ which are continuous over the closed interval $[-H,0]$. The jump discontinuity in the eigenfunctions $\hat W_n$ does not correspond to any physical motion; instead, the eigenfunctions $\hat W_n$ are convenient mathematical aids used to obtain eigenfunction expansions in the function space $L^2_\mu$.

\subsubsection*{Number of internal zeros of the eigenfunctions} 

Another consequence of $d_2 \neq 0$ is that by, lemma \ref{extra-oscillation}, there are two distinct solutions $\hat w_M$ and $\hat w_{M+1}$ with the same number  of internal zeros (i.e., $M$) in the interval $(-H,0)$. Noting that  
\begin{equation}
	-\frac{b_2}{d_2} = \frac{k^2}{f_0^2}
\end{equation}
the integer $M$ is determined by 
\begin{equation}
	\sigma_0^2 > \sigma_1^2>\dots > \sigma_M^2 > \frac{f_0^2}{k^2} \geq \sigma^2_{M+1} > \dots > 0.
\end{equation}
A smaller $f_0$ or a larger $k$ implies a larger $M$ and hence that $\hat w_M$ and $\hat w_{M+1}$ have a larger number of internal zeros, as shown in figure \ref{F-rotating-eigenfunctions}.

\subsubsection*{Expansion properties}

As in the previous problem, the eigenfunctions are complete in $L^2_\mu$ but overcomplete in $L^2$ due to the additional surface gravity-capillary wave.

Given a twice continuously differentiable function $\chi(z)$ satisfying $\chi(-H)=0$, we define the discontinuous function $X(z)$ by
\begin{equation}
	X(z) = 
	\begin{cases}
		\chi(z) &\quad \textrm{for } z\in [-H,0) \\
		\chi(0) + \frac{f_0^2}{k^2}  \left( g_b + \tau \, k^2 \right)^{-1} \chi'(0)  &\quad \textrm{for } z = 0
	\end{cases}
\end{equation}
as in theorem \ref{uniform}. Then, by theorem \ref{uniform}, we have the expansions
\begin{equation}
	\chi(z) = \sum_{n=0}^\infty \inner{X}{\hat W_n} \, \hat w_n(z) \quad \textrm{and} \quad \chi'(z) = \sum_{n=0}^\infty \inner{X}{\hat W_n}\, w_n'(z).
\end{equation}
Moreover, if $\chi(z)$ is the vertical structure at $t=0$ (and at some wavevector $\vec k$) and we assume $\partial_t w(\vec x,z,t=0)=0$, then the subsequent time-evolution is given by
\begin{equation}
  w(\vec x, z,t) = \sum_{n=0}^\infty \inner{X}{\hat W_n} \, \hat w_n(z) \, \cos\left(\sigma_n k t\right) \,\mathrm{e}^{\mathrm{i}\vec k \cdot \vec x}.
\end{equation}

\section{Quasigeostrophic waves} \label{S-QG}

\subsection{Linear equations}

Linearizing the quasigeostrophic equations about a quiescent background state with an infinitesimally sloping lower boundary, at $z=-H$, and a rigid flat upper boundary, at $z=0$, renders
\begin{align}
	\label{linear-q}
	\partial_t \left[ \lap_z \psi +  \partial_z \left( S^{-1} \, \partial_z \psi \right) \right]  +  \unit z \cdot \left( \grad_z \psi \times \grad_z f \right) &=0 \quad \text{for } z\in(-H,0)\\
	\label{linear-r1}
	\partial_t \left( S^{-1} \, \partial_z \psi \right)  +  \unit z \cdot \left( \grad_z \psi \times f_0 \, \grad_z h \right) &=0 \quad \text{for } z=-H\\
	\label{linear-r2}
	\partial_t \left( S^{-1} \, \partial_z \psi \right) &= 0 \quad \text{for } z=0.
\end{align}
See \cite{rhines_edge_1970}, \cite{charney_oceanic_1981}, \cite{straub_dispersive_1994} for details.
The streamfunction $\psi$ is defined through $\vec u = \unit z \times \grad_z \psi$ where $\vec u$ is the horizontal velocity and $\grad_z = \unit x \, \partial_x + \unit y \, \partial_y$ is the horizontal Laplacian. The stratification parameter $S$ is given by
\begin{equation}
	S(z) = \frac{N^2(z)}{f_0^2},
\end{equation}
where $N^2$ is the buoyancy frequency and $f_0$ is the reference Coriolis parameter. The latitude dependent Coriolis parameter $f$ is defined by
\begin{equation}
	f(y) = f_0 + \beta \, y.
\end{equation}
Finally, $h(\vec x)$ is the height of the topography at the lower boundary and is a linear function of the horizontal position vector $\vec x$. Consistent with quasigeostrophic theory, we assume that topography $h$ is small and so we evaluate the lower boundary condition at $z=-H$ in equation \eqref{linear-r1}.

\subsection{The streamfunction eigenvalue problem} \label{S-QG-stream}

We assume wave solutions of the form
\begin{equation}\label{QG-psi-wave}
	\psi(\vec x, z,t) = \hat \psi (z) \, \mathrm{e}^{\mathrm{i}(\vec k \cdot \vec x - \omega t)}
\end{equation}
where $\vec k = \unit x \, k_x + \unit y \, k_y$ is the horizontal wavevector and $\omega$ is the angular frequency. 
 
We denote by $\Delta \theta_f$ the angle between the horizontal wavevector $\vec k$ and the gradient of Coriolis parameter $\grad_z f$,
\begin{equation} \label{theta_f}
	\sin{(\Delta \theta_f)} = \frac{1}{k \, \beta } \, \unit z \cdot \left(\vec k \times  {\grad_z f }\right),
\end{equation}
where $k= \abs{\vec k}$ is the horizontal wavenumber. Positive angles are measured counter-clockwise relative to $\vec k$. Thus, $\Delta \theta_f>0$ indicates that $\vec k$ points to the right of $\grad_z f$ while $\Delta \theta_f<0$ indicates that $\vec k$ points to the left of $\grad_z f$.

We define the topographic parameter $\alpha$ by
\begin{equation}
	\alpha = \abs{f_0 \, \grad_z h}.
\end{equation}
In analogy with $\Delta \theta_f$, we define the angle $\Delta \theta_h$ by
\begin{equation} \label{theta_i}
	\sin{(\Delta \theta_h)} =  \frac{1}{k \, \alpha} \, \unit z \cdot \left( \vec k \times  {f_0 \grad_z h} \right) 
\end{equation}
with a similar interpretation assigned to $\Delta \theta_h>0$ and $\Delta \theta_h<0$.

Substituting the wave solution \eqref{QG-psi-wave} into the linear quasigeostrophic equations \eqref{linear-q}\textendash \eqref{linear-r2} and assuming that $\alpha \, \sin(\Delta\theta_h) \neq 0$, $\omega \neq 0$, and $k\neq 0$, we obtain
\begin{align}
		\label{QG-psi-l-eigen1}
	- (S^{-1} \, \hat \psi')' + k^2 \, \hat \psi  = \lambda  \, \hat \psi \quad &\text{for } z\in(-H,0)\\
	\label{QG-psi-l-eigen2}
	-\frac{\beta}{\alpha} \, \frac{\sin{(\Delta \theta_f)}}{ \sin{(\Delta \theta_h)}} \, S^{-1} \, \hat \psi' =  \lambda \,  \psi  \quad &\text{for } z=-H \\
	\label{QG-psi-l-eigen3}
	S^{-1}\, \psi' = 0 \quad &\text{for } z=0,
\end{align}
where we have defined the eigenvalue $\lambda$ by
\begin{equation}\label{qg-dispersion}
	\lambda = -\frac{k \, \beta \, \sin{(\Delta \theta_f)}}{\omega}.
\end{equation}
Since $k\neq 0$ then $\lambda=0$ is not an eigenvalue. The above problem \eqref{QG-psi-l-eigen1}\textendash\eqref{QG-psi-l-eigen3} was recently considered in \cite{lacasce_prevalence_2017}.

\begin{figure}
  \centerline{\includegraphics[width=1.\columnwidth]{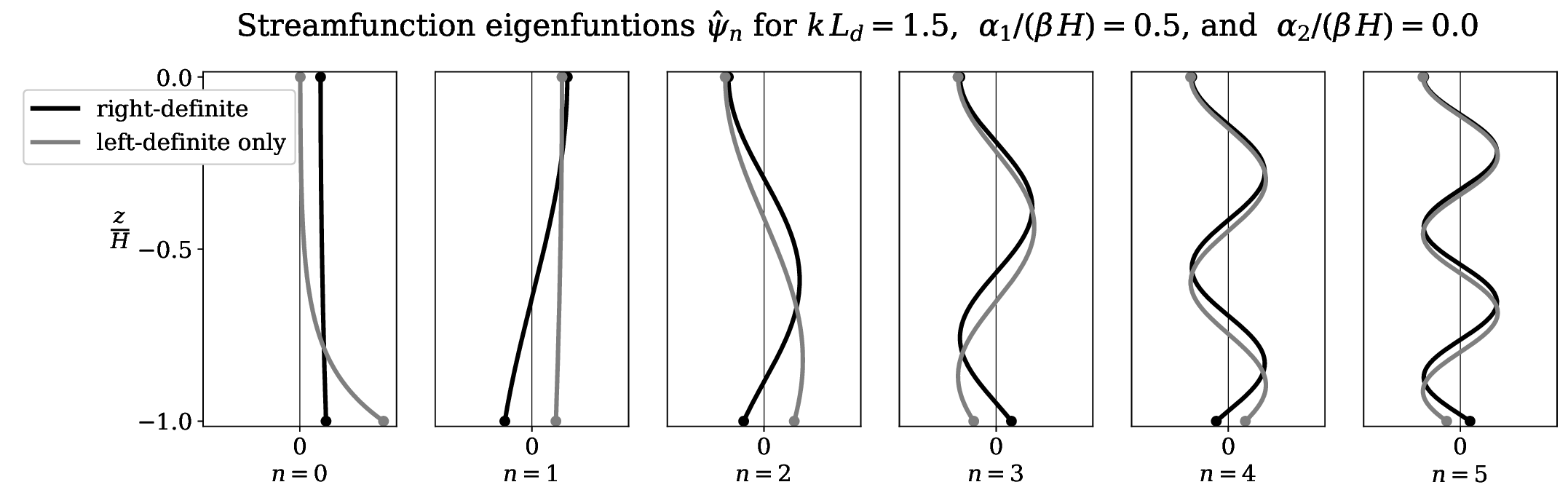}}
  \caption{The streamfunction eigenfunctions $\hat \psi_n$ of the quasigeostrophic eigenvalue problem with a sloping bottom from \S \ref{S-QG-stream}. Two cases are shown. The first is with $\Delta \theta_f = -90^\circ$ and $\Delta \theta_1 = -30^\circ$ and is both right-definite and left-definite. The second is with $\Delta \theta_f = -45^\circ$ and $\Delta \theta_1 = 15^\circ$ and is only left-definite. In the right-definite case the $n$th eigenfunction has $n$ internal zero whereas in the left-definite only case there are two eigenfunctions ($n=0,1$) with no internal zeros.}
  \label{F-QG-psi}
\end{figure}

\subsubsection*{Definiteness \& the underlying function space}

The eigenvalue problem has one $\lambda$-dependent boundary condition and so the underlying function space is 
\begin{equation}
	L^2_\mu \cong L^2 \oplus \C.
\end{equation}
The appropriate inner product is obtained from equations \eqref{dmu_inner} and \eqref{Di}
\begin{align}\label{inner-qg}
	\inner{\varphi}{\phi} =  \frac{1}{H} \left[ \int_{-H}^0 \varphi \, \phi \, \mathrm{d}z + \frac{\alpha}{\beta} \, \frac{\sin\left(\Delta \theta_h\right)}{\sin\left(\Delta \theta_f\right)} \, \varphi(-H) \, \phi(-H) \right]
\end{align} 
where we have introduced the factor $1/H$ for dimensional consistency in eigenfunction expansions. By proposition \ref{right-definite-prop}, the problem is right-definite for horizontal wavevectors $\vec k$ satisfying
\begin{equation}\label{qg-right-definite-condition}
	\frac{\sin{(\Delta \theta_h)}}{\sin{(\Delta \theta_f)}} >0
\end{equation}
and, in such cases, $L^2_\mu$ equipped with the inner product \eqref{dmu_inner} is a Hilbert space. However, $L^2_\mu$ is not a Hilbert space for all wavevectors $\vec k$. By proposition \ref{left-definite-prop}, the problem is left-definite for all wavevectors $\vec k$ and so $L^2_\mu$, equipped with the inner product \eqref{dmu_inner}, is generally a Pontryagin space.

We write $\hat \Psi_n$ for the eigenfunctions and $\hat \psi_n$ for the solutions of equations \eqref{QG-psi-l-eigen1}\textendash \eqref{QG-psi-l-eigen3}. The eigenfunctions $\hat \Psi_n$ are related to the solutions $\hat \psi_n$ by \eqref{L2_mu-element} with boundary values $\hat \Psi_n(0)$ given by equation \eqref{discontinuity}. However, since $c_1 = 1$ and $d_1=0$ in equation \eqref{QG-psi-l-eigen2} [compare with equations \eqref{EigenDiff}\textendash\eqref{EigenB2}] then $\hat \Psi_n = \hat \psi_n$ on the closed interval $[-H,0]$. Thus, the solutions $\psi_n$ are also the eigenfunctions.

With theorem \ref{real-complete}, we deduce that all eigenvalues $\lambda_n$ are real and the corresponding eigenfunctions $\{\hat \psi_n\}_{n=0}^\infty$ form an orthonormal basis for $L^2_\mu$. Orthonormality is defined with respect to the inner product given by equation \eqref{inner-qg} and takes the form
\begin{equation}
	\pm \delta_{mn} = \inner{\hat \psi_m}{\hat \psi_n}
\end{equation}
where we have taken the eigenfunctions $\hat \psi_m$ and $\hat \psi_n$ to be non-dimensional.

\subsubsection*{Properties of the eigenfunctions}

By lemma \ref{extra-oscillation}, the number of internal zeros of the eigenfunctions $\{\hat \psi_n\}_{n=0}^\infty$ depends on the propagation direction and hence [by equation \eqref{qg-right-definite-condition}] on the definiteness of the problem (see figure \ref{F-QG-psi}):
\begin{itemize}
	\item [\emph{1.}] if the problem is right-definite then the $n$th eigenfunction has $n$ internal zeros,
	\item [\emph{2.}] if the problem is not right-definite then both $\psi_0$ and $\psi_1$ have no internal zeros; the remaining eigenfunctions $\psi_n$, for $n>1$, have $n-1$ internal zeros.
\end{itemize}

As the problem is left-definite for all wavevectors $\vec k$, we can use proposition \ref{eigenvalue-sign} to determine the sign of the eigenvalues. 
Proposition \ref{eigenvalue-sign} informs us that
\begin{equation}
	\lambda_n \inner{\hat \psi_n}{\hat \psi_n} > 0.
\end{equation}
In the first case, when the problem is right-definite, all eigenvalues are positive and all eigenfunctions $\hat \psi_n$ satisfy $\inner{\hat \psi_n}{\hat \psi_n} >0$.
In the second case, when the problem is only left-definite, then there is one negative eigenvalue $\lambda_0$ and the corresponding eigenfunction $\hat \psi_0$ satisfies $\inner{\hat \psi_0}{\hat \psi_0} < 0$.
The remaining eigenvalues are positive and their corresponding eigenfunctions satisfy 
$ \inner{\hat \psi_n}{\hat \psi_n} >0$.
In fact, from equation \eqref{qg-dispersion}, we see that waves with $\inner{\hat \psi_n}{\hat \psi_n} >0 $ have westward phase speeds $\omega_n /k <0$ while waves with $\inner{\hat \psi_n}{\hat \psi_n} <0 $ have eastward phase speeds $\omega_n/k >0$.

\subsubsection*{Expansion properties}

The eigenfunctions $\{\hat \psi_n\}_{n=0}^\infty$ are complete in $L^2_\mu$ but overcomplete in $L^2$. Physically, there is now an additional eigenfunction corresponding to a topographic Rossby wave ($n=0$ in figure \ref{F-QG-psi}).

Given a twice continuously differentiable function $\phi(z)$ satisfying $\phi'(0)=0$, then from theorem \ref{uniform}, we have
\begin{equation}
	\phi(z) = \sum_{n=0}^\infty \frac{\inner{\phi}{\hat \psi_n}}{\inner{\hat \psi_n}{\hat \psi_n}} \, \hat \psi_n(z) \quad \textrm{and} \quad \phi'(z) = \sum_{n=0}^\infty \frac{\inner{\phi}{\hat \psi_n}}{\inner{\hat \psi_n}{\hat \psi_n}} \, \hat \psi'_n(z),
\end{equation}
with both series converging uniformly on $[-H,0]$ (note that $\phi$ is not required to satisfy any particular boundary condition at $z=-H$). If the vertical structure at time $t=0$  (and at some wavevector $\vec k$) is given by $\phi$, then the subsequent time-evolution is given by
\begin{equation}\label{qg-time-evolution}
	 \psi(\vec x, z,t) = \sum_{n=0}^\infty \frac{\inner{\phi}{\hat \psi_n}}{\inner{\hat \psi_n}{\hat \psi_n}} \, \hat \psi_n(z) \, \cos\left(\omega_n t\right) \, \mathrm{e}^{\mathrm{i}\vec k \cdot \vec x},
\end{equation}
where the angular frequency $\omega_n$ is given by equation \eqref{qg-dispersion}.

\section{A localized perturbation at the boundary}\label{S-step-forcing}

We now consider a localized perturbation at a dynamically-active boundary; we idealize such a perturbation by a boundary step-function $\Theta_i$ (for $i \in S$) given by
	\begin{equation}\label{Theta}
		\Theta_i(z) =
		\begin{cases}
			1 \quad \text{if } z=z_i\\
			0 \quad \text{otherwise.}
		\end{cases}
	\end{equation}
	Using equation \eqref{expansion}, the series expansion of $\Theta_i$ is found to be
	\begin{equation}\label{theta-expansion}
		\Theta_i 
		=  \frac{1}{D_i} \sum_{n=0}^\infty \frac{\, \Phi_n(z_i)}{\inner{\Phi_n}{\Phi_n}} \, \Phi_n(z).
	\end{equation}
	
For the non-rotating Boussinesq problem of \S \ref{S-Boussinesq-non-rot}, a step-function perturbation with amplitude $w_0$ (at some wavevector $\vec k$) yields the time-evolution
\begin{equation}
	w(\vec x, z,t) = w_0 \left(\frac{g_b + \tau \, k^2}{N_0^2\, H}\right) \sum_{n=0}^\infty \hat w_n(0) \, \hat w_n(z) \, \cos\left(\sigma_n k t \right) \,\mathrm{e}^{\mathrm{i}\vec k \cdot \vec x}.
\end{equation}
Analogously, for the quasigeostrophic problem of \S \ref{S-QG-stream}, a step-function perturbation with amplitude $\psi_0$ (at some wavevector $\vec k$) yields the time-evolution
\begin{equation}
	\psi(\vec x, z,t) = \psi_0 \left[\frac{\alpha \, \sin(\Delta \theta_h) }{H\, \beta \, \sin(\Delta \theta_f)}\right] \sum_{n=0}^\infty \frac{\hat \psi_n(-H)}{\inner{\hat\psi_n}{\hat \psi_n}} \, \hat \psi_n(z) \, \cos\left(\omega_n t\right) \,\mathrm{e}^{\mathrm{i} \vec k \cdot \vec x}.
\end{equation}
That both the above series converge to a step-function at $t=0$ (and $\vec x = \vec 0$) is confirmed by theorem \ref{expansion-theorem} along with theorem 2 in \citet{fulton_two-point_1977}.

We thus see that a step-function perturbation induces wave motion with an amplitude that is proportional to the boundary-confined restoring force (at wavevector $\vec k$). Moreover, the amplitude of each constituent wave in the resulting motion is proportional to the projection of that wave onto the dynamically-active boundary.

\section{Summary and conclusions}\label{S-conclusion}

We have developed a mathematical framework for the analysis of three-dimensional wave problems with dynamically-active boundaries (i.e., boundaries where time derivatives appear in the boundary conditions). The resulting waves have vertical structures that depend on the wavevector $\vec k$: For Boussinesq gravity waves, the dependence is only through the wavenumber $k$ whereas the dependence for quasigeostrophic Rossby waves is on both the wavenumber $k$ and the propagation direction $\vec k/k$. Moreover, the vertical structures of the waves are complete in a space larger than $L^2$, namely, they are complete in $L^2_\mu \cong L^2 \oplus \C^s$ where $s$ is the number of dynamically active boundaries (and the number of boundary-trapped waves). Each dynamically active boundary contributes an additional boundary-trapped wave and hence an additional degree of freedom to the problem. Mathematically, the presence of boundary-trapped waves allows us to expand a larger collection of functions (with a uniformly convergent series) in terms of the modes. The resulting series are term-by-term differentiable and the differentiated series converges uniformly. In fact, the normal modes have the intriguing property converging pointwise to functions with finite jump discontinuities at the boundaries, a property related to their ability to expand distributions in the \cite{bretherton_critical_1966} ``$\delta$-function formulation'' of a physical problem. By considering a step-function perturbation at a dynamically-active boundary, we find that the subsequent time-evolution consists of waves whose amplitude is proportional to their projection at the dynamically-active boundary. Within the mathematical formulation is a qualitative oscillation theory relating the number of internal zeros of the eigenfunctions to physical quantities; indeed, for the quasigeostrophic problem, the number of zeros of the topographic Rossby wave depends on the propagation direction while, for the rotating Boussinesq problem, the ratio of the Coriolis parameter to the horizontal wavenumber determines at which integer $M$ we obtain two modes with $M$ zeros.

Our results also clarify the difference between the traditional quasigeostrophic baroclinic modes and the the $L^2\oplus \C^2$ eigenfunctions of \cite{smith_surface-aware_2012}. Namely, the series expansion of a function in terms of the \cite{smith_surface-aware_2012} eigenfunctions has a term-by-term derivative that converges uniformly over the whole interval regardless of the boundary conditions satisfied by the function. In contrast, an eigenfunction expansion in terms of the baroclinic modes only has this property if the function satisfies the same boundary conditions as the baroclinic modes. One consequence is the following. Suppose we expand an arbitrary quasigeostrophic state, with boundary buoyancy anomalies, in terms of the baroclinic modes. The presence of these boundary buoyancy anomalies implies that this state does not satisfy the same boundary conditions as the baroclinic modes. The resulting series expansion in term of the baroclinic modes is then not differentiable at the boundaries. We are thus unable to recover the value of the boundary buoyancy anomalies from the series expansion and so we have lost information in the expansion process.  This loss of information does not occur with  $L^2\oplus \C^2$ expansions. 

Normal mode decompositions of quasigeostrophic motion play an important role in physical oceanography \cite[e.g.,][]{wunsch_vertical_1997,lapeyre_what_2009,lacasce_prevalence_2017}. Other applications include the extension of equilibrium statistical mechanical calculations \cite[e.g.,][]{bouchet_statistical_2012,venaille_catalytic_2012} to three-dimensional systems with dynamically-active boundaries. Moreover, the mathematical framework developed here is useful for the development of weakly non-linear wave turbulence theories \cite[e.g.,][]{fu_nonlinear_1980,smith_scales_2001,scott_wave_2014} in systems with both internal and boundary-trapped waves.

\begin{acknowledgments}
	I am grateful to Stephen Griffies for numerous readings, helpful comments, and advice offered throughout this project. This report was prepared by Houssam Yassin under award NA18OAR4320123 from the National Oceanic and Atmospheric Administration, U.S. Department of Commerce. The statements, findings, conclusions, and recommendations are those of the author and do not necessarily reflect the views of the National Oceanic and Atmospheric Administration, or the U.S. Department of Commerce. The data that supports the findings of this study are available within the article and its supplementary material.
\end{acknowledgments}

\appendix

\section{Additional properties of the eigenvalue problem}\label{S-additional-properties}

\subsection{Construction of $L^2_\mu$}\label{L2-mu-construction}

First, define the weighted Lebesgue measure $\sigma$ by
\begin{equation}
	\sigma([a,b]) = \int_a^b r \, \mathrm{d}z \quad \text{where } a,b\in[z_1,z_2].
\end{equation}
The measure $\sigma$ induces the differential element
\begin{equation}
	\mathrm{d}\sigma(z) = r(z)\,\mathrm{d}z
\end{equation}
and is the measure associated with $L^2$ [see equations \eqref{L2-equal} and \eqref{L2-inner}].

Now, for $i\in S$ [see equation \eqref{S-set}], define the pure point measure $\nu_i$ by \cite[e.g.,][section I.4, example 2]{reed_methods_1980}
\begin{equation}
	\nu_i([a,b]) = 
		\begin{cases}
			D_i^{-1} \ \ \ &\text{if } z_i \in [a,b] \\
			0 \ \ \ &\text{otherwise,}
		\end{cases}
\end{equation}
where $D_{i}$ is the combination of boundary condition coefficients given by equation \eqref{Di}. 
The pure point measure $\nu_i$ induces the differential element 
\begin{equation}
	\mathrm{d}\nu_i(z) = D_i^{-1}\, \delta(z-z_i) \, \mathrm{d}z,
\end{equation}
where $\delta(z)$ is the Dirac distribution.

Consider now the space $L^2_{\nu_i}$ of ``functions'' $\phi$ satisfying  
\begin{equation}
	\abs{ \intz \abs{\phi}^2\, \mathrm{d}\nu_i} = \abs{D_i^{-1}} \, \intz \abs{\phi}^2 \, \delta(z-z_i) \, \mathrm{d}z = \abs{D_i^{-1}} \,\abs{\phi(z_i)}^2 < \infty.
\end{equation}
Elements of $L^2_{\nu_i}$ are not functions, but rather equivalence classes of functions. Two functions, $\phi$ and $\psi$, on the interval $[z_1,z_2]$ are equivalent in $L^2_{\nu_i}$ if $\phi(z_i) = \psi(z_i)$. In particular, $L^2_{\nu_i}$ is a one-dimensional vector space and is hence isomorphic to the field of complex numbers $\C$
\begin{equation}
	L^2_{\nu_i} \cong \C.
\end{equation}

Now define the measure $\mu$ by
\begin{equation}\label{mu}
	\mu = \sigma + \sum_{i\in S} \nu_i
\end{equation}
with an induced differential element of
\begin{equation} \label{dmu-appendix}
	\mathrm{d}\mu(z) = \left[ r(z) + \sum_{i\in S} D_i^{-1} \, \delta(z-z_i) \right] \mathrm{d}z.
\end{equation}
Then $L^2_\mu$ is the space of equivalence classes of functions that are square-integrable with respect to the measure $\mu$.

Since the measures $\sigma$ and $\nu_i$, for $i\in S$, are mutually singular, we have \cite[][section II.1, example 5]{reed_methods_1980}
\begin{equation}\label{isomorphism}
	L^2_\mu \cong L^2 \oplus \sum_{i\in S} L^2_{\nu_i} \cong L^2 \oplus \C^s
\end{equation}
from which we see that $L^2_\mu$ is ``larger'' by $s$ dimensions.

\subsection{The eigenvalue problem in $L^2_\mu$}\label{S-eigen-in-L2-mu}

We construct here an operator formulation of \eqref{EigenDiff}\textendash\eqref{EigenB2} as an eigenvalue problem in the Pontryagin space $L^2_\mu$.

Define the differential operator $\ell$ acting on a function $\phi$ by
\begin{equation}
	\ell \, \phi = \frac{1}{r} \, \left[  (p \, \phi')' - q \, \phi \right].
\end{equation}
We also define the following boundary operators for $i \in S$, 
\begin{align}
	\Bc_i \phi &= \left[a_i \, \phi(z_i) - b_i \, (p \, \phi')(z_i)\right]\\
	\Cc_i  \phi &= \left[c_i \, \phi(z_i) - d_i \, (p \, \phi')(z_i)\right].
\end{align}
Let $\Phi$ be an element of $L^2_\mu$, as in equation \eqref{L2_mu-element}, with boundary values $\Phi(z_i) = \Cc_i \phi$ for $i\in S$ and equal to $\phi$ elsewhere.
We then define the operator $\Lc$, acting on functions $\Phi$, by
\begin{equation}
	\Lc \, \Phi = 
	\begin{cases}
		-\ell \, \phi  \quad &\text{for } z\in(z_1,z_2)\\
		-\Bc_i \,  \phi  \quad &\text{for } z=z_i \text{ where }i\in S
	\end{cases}
\end{equation}
with a domain $D(\Lc)\subset L^2_\mu$ defined by
\begin{equation}
\begin{aligned}
	D(\Lc) = \{ \Phi \in L^2_\mu \ | & \ \phi \text{ is continuously differentiable, } \ell \, \phi \in L^2,   \ \Phi(z_i) = \Cc_i \, \phi \\\text{ for } i\in S  & \text{ and } \Bc_i \phi = 0 \text{ for } i \in \{1,2\}\setminus S  \}.
\end{aligned}
\end{equation}
Recall that $S$ contains indices of the $\lambda$-dependent boundary conditions, and therefore, $\{1,2\}\setminus S$ contains the indices of the $\lambda$-independent boundary conditions. 

Then, on the subspace $D(\Lc)$ of $L^2_\mu$, the eigenvalue problem \eqref{EigenDiff}\textendash\eqref{EigenB2} may be written as 
\begin{equation}\label{Op-Eigenproblem-app}
	\Lc \, \Phi = \lambda \, \Phi.
\end{equation} 
As shown in \cite{russakovskii_operator_1975,russakovskii_matrix_1997}, $\Lc$ is a self-adjoint operator in the space $L^2_\mu$.

There is a natural quadratic form $Q$, induced by the eigenvalue problem \eqref{EigenDiff}\textendash\eqref{EigenB2}, given by
\begin{equation}\label{quadratic_form}
	Q(\Phi,\Psi) = \inner{\Phi}{\Lc \, \Psi}.
\end{equation}
For elements $\Phi,\Psi \in D(\Lc)$, we obtain
\begin{equation}
\begin{aligned}
		Q(\Phi,\Psi)  &= \intz \left[p \, \ov{\phi'} \, \psi' + q \, \ov{\phi} \, \psi \right] \, \mathrm{d}z 
	+ \sum_{i\in\{1,2\}\setminus S} (-1)^{i+1} \, \frac{a_i}{b_i} \,\ov{\phi(z_i)} \, \psi(z_i) \, \\
	&\quad - \sum_{i\in S} \frac{1}{D_i}
	\ov{
	\left(
	\begin{matrix}
		\psi(z_i) \\
		-(p \, \psi')(z_i)
	\end{matrix}
	\right)
	} \cdot
	\left(
	\begin{matrix}
		a_i \, c_i & a_i \, d_i \\
		a_i \, d_i & b_i \, d_i
	\end{matrix}
	\right)
	\left(
	\begin{matrix}
		\phi(z_i) \\
		-(p \, \phi')(z_i)
	\end{matrix}
	\right)
\end{aligned}
\end{equation}
for $b_i \neq 0$ for $i \in \{1,2\} \setminus S$. If $b_i=0$ for $i \in \{1,2\} \setminus S$ then we replace the term $a_i/b_i$ with zero. 

To develop the reality and completeness theorem \ref{real-complete}, we provide the following definitions.

\begin{definition}[Right-definite]
	The eigenvalue problem \eqref{EigenDiff}\textendash\eqref{EigenB2} is said to be right-definite if $L^2_\mu$ is a Hilbert space or, equivalently, if
	\begin{equation}
		\inner{\Phi}{\Phi} > 0 
	\end{equation} 
	for all non-zero $\Phi \in L^2_\mu$.
\end{definition}

\begin{definition}[Left-definite]
	The eigenvalue problem \eqref{EigenDiff}\textendash\eqref{EigenB2} is said to be left-definite if 
	\begin{equation}\label{left-definite-inequality}
		Q(\Phi,\Phi) \geq 0
	\end{equation}
	for all $\Phi \in D(\Lc)$.
\end{definition}

One can then prove propositions \ref{right-definite-prop} and \ref{left-definite-prop} through straightforward manipulations.

\subsection{Properties of eigenfunction expansions}\label{S-more-eigen-expansions}

The following theorem features some of the novel properties of the basis $\{\Phi_n\}_{n=0}^\infty$ of $L^2_\mu$. Theorem \ref{expansion-theorem} below is a generalization of a theorem first formulated, in the right-definite case, by \citet{walter_regular_1973} and \cite{fulton_two-point_1977}.

\begin{theorem}[Eigenfunction expansions]\label{expansion-theorem} Let $\{\Phi_n\}_{n=0}^\infty$ be the set of eigenfunctions of the eigenvalue problem \eqref{EigenDiff}\textendash\eqref{EigenB2}. Then the following properties hold.
\begin{itemize}
	\item[(i)] Null series:  For $i \in S$, we have 
	\begin{equation}\label{null}
		0 = D_i^{-1} \, \sum_{n=0}^\infty \frac{1}{\inner{\Phi_n}{\Phi_n}} \, \Phi_n(z_i) \, \phi_n(z) 
	\end{equation}
	with equality in the sense of $L^2$.
	\item[(ii)] Unit series: For $i \in S$, we have
	\begin{equation}\label{unit}
		1 = D_i^{-1} \, \sum_{n=0}^\infty \frac{1}{\inner{\Phi_n}{\Phi_n}} \, \abs{\Phi_n(z_i)}^2.
	\end{equation}
	\item[(iii)] $L^2$-expansion: Let $\psi \in L^2$, then
	\begin{equation}\label{L2-expansion}
		\psi =  \sum_{n=0}^\infty \frac{1}{\inner{\Phi_n}{\Phi_n}} \,\left(\intz \ov{\psi} \, \phi_n \, r \, \mathrm{d}z \right) \, \phi_n.
	\end{equation}
	with equality in the sense of $L^2$.
	\item[(iv)] Interior-boundary orthogonality: Let $\psi \in L^2$, then for $i \in S$, we have
	\begin{equation}\label{int-bound-orth}
		0  = \sum_{n=0}^\infty \frac{1}{\inner{\Phi_n}{\Phi_n}}  \left(\intz \ov{\psi} \, \phi_n \, r \, \mathrm{d}z \right) \Phi_n(z_i).
	\end{equation}
\end{itemize}
\end{theorem}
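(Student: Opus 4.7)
The plan is to derive all four identities as restrictions of the general $L^2_\mu$-expansion \eqref{expansion} applied to two carefully chosen test elements. The key observation is that equality in $L^2_\mu$ carries two independent pieces of information by the isomorphism \eqref{isomorphism}: agreement of the boundary values at each $z_i$ for $i\in S$, and agreement as elements of $L^2$ on the interior. Reading off each piece separately converts a single $L^2_\mu$ identity into a pair of statements\textemdash an $L^2$ series in the interior and a numerical series at each boundary point.

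For parts (i) and (ii), I would apply \eqref{expansion} to the boundary step-function $\Theta_i \in L^2_\mu$ from \eqref{Theta}. Since $\Theta_i$ is the zero element of $L^2$ on the interior and equals one at $z=z_i$, the $L^2_\mu$ inner product \eqref{dmu_inner} immediately gives $\inner{\Theta_i}{\Phi_n} = D_i^{-1}\,\Phi_n(z_i)$, and substituting into \eqref{expansion} reproduces the representation \eqref{theta-expansion}. Reading off the interior piece of that equality, on which $\Phi_n$ coincides with $\phi_n$ and $\Theta_i$ is the zero function, yields (i); evaluating at $z = z_i$, where $\Theta_i(z_i) = 1$, yields (ii).

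For parts (iii) and (iv), given $\psi \in L^2$, I would lift it to an element $\Psi \in L^2_\mu$ by setting $\Psi = \psi$ on the interior and $\Psi(z_i) = 0$ for each $i\in S$. The boundary contributions to $\inner{\Psi}{\Phi_n}$ in \eqref{dmu_inner} then vanish, leaving $\inner{\Psi}{\Phi_n} = \int_{z_1}^{z_2} \bar{\psi}\,\phi_n\,r\,\mathrm{d}z$. Inserting this into \eqref{expansion} and again splitting interior from boundary recovers \eqref{L2-expansion} as the $L^2$ part and \eqref{int-bound-orth} as the boundary part, the latter precisely because $\Psi$ was engineered to vanish at each $z_i$.

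The main subtlety to justify\textemdash rather than a genuine obstacle\textemdash is that this decomposition of $L^2_\mu$-equalities into independent interior and boundary components is legitimate. One invokes the isomorphism $L^2_\mu \cong L^2 \oplus \C^s$ from \eqref{isomorphism} to extract the interior coordinate and each boundary coordinate separately, and verifies that $\Theta_i$ and the lifted $\Psi$ genuinely lie in $L^2_\mu$ (immediate from their definitions and the hypothesis $\psi\in L^2$). Convergence of each series is then inherited from the $L^2_\mu$-convergence of \eqref{expansion}: $L^2$-convergence of the interior series in (i) and (iii), and numerical convergence of the scalar series in (ii) and (iv).
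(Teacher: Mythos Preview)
Your proposal is correct and follows essentially the same approach as the paper. The paper defers to corollary 1.1 of \cite{fulton_two-point_1977}, and indeed explicitly carries out your step-function computation for $\Theta_i$ in \S\ref{S-step-forcing} to obtain \eqref{theta-expansion}; your decomposition into interior and boundary components via the isomorphism \eqref{isomorphism}, together with the trivial lift $\Psi(z_i)=0$ for parts (iii) and (iv), is exactly the intended argument.
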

\begin{proof}
	The proof is similar to the proof of corollary 1.1 in \cite{fulton_two-point_1977}.
\end{proof}

\subsection{Pointwise convergence and Sturm-Liouville series}\label{S-pointwise}

Theorem 3 in \cite{fulton_two-point_1977} states that the $\Phi_n$ series expansion \eqref{expansion} behaves like a Fourier series in the interior of the interval $(z_1,z_2)$ (see appendix \ref{S-math-app} for why this theorem applies in the left-definite case). Since the expansions \eqref{expansion} and \eqref{expansion-phi} in terms of $\Phi_n$ and $\phi_n$ are equal in the interior, then the above theorem applies to the $\phi_n$ series \eqref{expansion-phi} as well. It is at the boundaries points, $z=z_1,z_2$, where the novel behaviour of the series expansions \eqref{expansion} and \eqref{expansion-phi} appears.

For traditional Sturm-Liouville expansions [with eigenfunctions of problem \eqref{EigenDiff}-\eqref{EigenB2} with $c_i,d_i = 0$ for $i=1,2$], eigenfunction expansions behave like the analogous Fourier series on $[z_1,z_2]$ [page 16 in \cite{titchmarsh_eigenfunction_1962}  or chapter 1, section 9, in \cite{levitan_introduction_1975}]. In particular, for a twice continuously differentiable function $\psi$, the eigenfunction expansion of $\psi$ converges uniformly to $\psi$ on $[z_1,z_2]$ so long as the eigenfunctions $\phi_n$ do not vanish at the boundaries. If the eigenfunctions vanish at one of the boundaries, then we only obtain uniform convergence if $\psi$ vanishes at the corresponding boundary as well \citep[][section 22]{brown_fourier_1993}. Under these conditions, the resulting expansion will be differentiable in the interior of the interval, $(z_1,z_2)$, but not at the boundaries $z=z_1,z_2$ [see chapter 8, section 3, in \citet{levitan_introduction_1975} for the equiconvergence of differentiated Sturm-Liouville series with Fourier series and see section 23 in \citet{brown_fourier_1993} for the convergence behaviour of differentiated Fourier series].

Returning to the case of eigenfunction expansions for the eigenvalue problem \eqref{EigenDiff}\textendash\eqref{EigenB2} with $\lambda$-dependent boundaries, the following theorem provides pointwise (as well as uniform, in the case $d_i\neq 0$) convergence conditions for the $\phi_n$ series \eqref{expansion-phi}.

\begin{theorem}[Pointwise convergence]\label{pointwise}
	Let $\psi$ be a twice continuously differentiable function on the interval $[z_1,z_2]$ satisfying any $\lambda$-independent boundary conditions in the eigenvalue problem \eqref{EigenDiff}\textendash\eqref{EigenB2}. Define the function $\Psi$ on $[z_1,z_2]$ by
	\begin{equation}
		\Psi(z) = 
		\begin{cases}
			\Psi(z_i) \quad &\textrm{at } z=z_i, \textrm{ for } i \in S,\\
			\psi(z) \quad &\textrm{otherwise}.
		\end{cases}
	\end{equation}
	where $\Psi(z_i)$ are constants for $i\in S$ (the $\lambda$-dependent boundaries). Then we have the following.
	\begin{itemize}
		\item [(i)] If $d_i \neq 0$ for $i \in S$, then the $\phi_n$ series expansion \eqref{expansion-phi} converges uniformly to $\psi(z)$ on the closed interval $[z_1,z_2]$,
		\begin{equation}
			\sum_{n=0}^\infty \frac{\inner{\Psi}{\Phi_n}}{\inner{\Phi_n}{\Phi_n}} \, \phi_n(z) = \psi(z).
		\end{equation} 
		 Furthermore, for the differentiated series, we have
			\begin{equation}
			\sum_{n=0}^\infty \frac{\inner{\Psi}{\Phi_n}}{\inner{\Phi_n}{\Phi_n}} \, \phi_n'(z) = 
			\begin{cases}
				\left( c_i \, \psi(z_i)- \Psi(z_i) \right)/ d_i \quad &\textrm{at } z=z_i, \textrm{ for } i \in S\\
				\psi'(z) \quad &\textrm{otherwise}.
			\end{cases}
			\end{equation} 
		\item [(ii)] If $d_i =0$, then we have
			\begin{equation}
				\sum_{n=0}^\infty \frac{\inner{\Psi}{\Phi_n}}{\inner{\Phi_n}{\Phi_n}} \, \phi_n=
				\begin{cases}
					\Psi(z_i)/c_i \quad &\textrm{at } z=z_i, \textrm{ for } i \in S\\
					\psi(z) \quad &\textrm{otherwise}.
				\end{cases}
			\end{equation}
	\end{itemize}
\end{theorem}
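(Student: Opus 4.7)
The strategy is to decompose $\Psi$ as a ``smooth'' part to which theorem \ref{uniform} applies, plus a sum of step-function corrections at the $\lambda$-dependent boundaries, and then to evaluate each correction pointwise using the Parseval-type identities of theorem \ref{expansion-theorem}.

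First, I would introduce $\tilde\Psi$ as in theorem \ref{uniform}, namely $\tilde\Psi(z_i) = c_i\psi(z_i) - d_i(p\psi')(z_i)$ at each $z_i$ with $i\in S$ and $\tilde\Psi(z)=\psi(z)$ otherwise. Theorem \ref{uniform} then yields uniform convergence of both the $\phi_n$-series for $\tilde\Psi$ and its term-by-term derivative to $\psi$ and $\psi'$ on $[z_1,z_2]$. Since $\Psi$ and $\tilde\Psi$ differ only at boundary points of $S$, we have the $L^2_\mu$ identity
\[
\Psi = \tilde\Psi + \sum_{i\in S}\alpha_i\,\Theta_i, \qquad \alpha_i = \Psi(z_i) - \tilde\Psi(z_i),
\]
with $\Theta_i$ the boundary step-function from \eqref{Theta}. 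Computing against the measure \eqref{dmu}, $\inner{\Theta_i}{\Phi_n} = D_i^{-1}\Phi_n(z_i)$ (the $r\,\mathrm{d}z$-integral vanishes), so the $\phi_n$-series of $\Theta_i$ is exactly the null series of \eqref{null}, which vanishes in $L^2(z_1,z_2)$; by the Fourier-equivalence result behind theorem 3 of \cite{fulton_two-point_1977} (extended to the left-definite case in appendix \ref{S-math-app}), the convergence is even pointwise on the open interval $(z_1,z_2)$.

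What remains is to evaluate the $\phi_n$-series of $\Theta_i$ at each boundary point $z_j \in S$. The cross-orthogonality $\inner{\Theta_i}{\Theta_j}=0$ for $i\neq j$, combined with the unit series \eqref{unit}, gives the Parseval identity $D_i^{-1}\sum_n \Phi_n(z_i)\Phi_n(z_j)/\inner{\Phi_n}{\Phi_n} = \delta_{ij}$. Using $\Phi_n(z_j) = c_j\phi_n(z_j) - d_j(p\phi_n')(z_j)$ to express $\phi_n(z_j)$, the series of $\Theta_i$ at $z_j$ splits into a ``diagonal'' piece controlled by this Parseval identity and a derivative-valued piece proportional to $D_i^{-1}\sum_n \Phi_n(z_i)(p\phi_n')(z_j)/\inner{\Phi_n}{\Phi_n}$. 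In case (ii), $d_j=0$ kills the derivative piece and the series equals $\delta_{ij}/c_j$ at $z_j$; assembling with theorem \ref{uniform}'s contribution from $\tilde\Psi$ yields $\psi(z_j) + \alpha_j/c_j = \Psi(z_j)/c_j$, as claimed.

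The main obstacle is case (i), $d_j\neq 0$: the derivative-valued sum $\sum_n \Phi_n(z_i)(p\phi_n')(z_j)/\inner{\Phi_n}{\Phi_n}$ is not directly among the Parseval-type identities in theorem \ref{expansion-theorem}, and a priori we only know convergence in the $L^2$ sense. Here I would invoke theorem 2 of \cite{fulton_two-point_1977}, which provides absolute and uniform convergence of precisely these boundary-derivative sums under the assumption $d_i\neq 0$, thereby upgrading the step-function correction to uniform convergence on the closed interval $[z_1,z_2]$. The resulting boundary values, together with the uniformly convergent $\tilde\Psi$-series, deliver the stated formulas both for the primal series (uniform convergence to $\psi$) and for the differentiated series (the correction $(c_j\psi(z_j)-\Psi(z_j))/d_j$ at $z=z_j$). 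The remaining work is straightforward $L^2_\mu$ bookkeeping of inner products.
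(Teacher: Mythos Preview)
Your proposal is correct and is precisely the mechanism behind Fulton's corollary 2.1, which the paper simply cites as its proof (with the left-definite extension furnished by appendix \ref{S-A-left-fulton}). The paper does not write out the decomposition $\Psi = \tilde\Psi + \sum_{i\in S}\alpha_i\,\Theta_i$, but this is exactly how the result is obtained: theorem \ref{uniform} (Fulton's theorem 1) handles the smooth part $\tilde\Psi$, the Parseval-type identities of theorem \ref{expansion-theorem} fix the boundary values of the step-function corrections, and Fulton's theorem 2 supplies the uniform convergence of the null series on $[z_1,z_2]$ when $d_i\neq 0$. Your bookkeeping of the boundary values in both cases is accurate; note only that in case (ii) the convergence of the numerical boundary series $\sum_n \Phi_n(z_i)\Phi_n(z_j)/\inner{\Phi_n}{\Phi_n}$ follows already from the Parseval formula for $\Theta_i,\Theta_j\in L^2_\mu$, so no separate absolute-convergence estimate is needed there.
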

\begin{proof}
	This theorem is a generalization of corollary 2.1 in \cite{fulton_two-point_1977}. We provide the extension of the corollary to the left-definite problem in appendix \ref{S-A-left-fulton}.
\end{proof}

The $\Phi_n$ series \eqref{expansion} converges to $\Psi(z_i)$ at $z=z_i$ 
for $i \in S$ (i.e., at $\lambda$-dependent boundaries) but otherwise behaves as in theorem \ref{pointwise}. 

\section{Literature survey and mathematical proofs}\label{S-math-app}

\subsection{Literature survey}\label{S-lit-review}

There is an extensive literature associated with the eigenvalue problem \eqref{EigenDiff}\textendash\eqref{EigenB2} with $\lambda$-dependent boundary conditions \citep[see][ and citations within]{schafke_s-hermitesche_1966,fulton_two-point_1977}. One can use the $S$-hermitian theory of \cite{schafke_s-hermitesche_1965,schafke_s-hermitesche_1966,schafke_s-hermitesche_1968} to show that one obtains real eigenvalues when the problem is either right-definite or left-definite (see \S \ref{math-section}) but completeness results in $L^2_\mu$ are unavailable in this theory.

The right-definite theory is well-known \citep{evans_non-self-adjoint_1970,walter_regular_1973,fulton_two-point_1977}. In particular, \cite{fulton_two-point_1977} applies the residue calculus techniques of \cite{titchmarsh_eigenfunction_1962} to the right-definite problem and, in the process, extends some well-known properties of Fourier series to eigenfunction expansions associated with \eqref{EigenDiff}\textendash\eqref{EigenB2}. A recent Hilbert space approach to the right-definite problem, in the context of obtaining a projection basis for quasigeostrophic dynamics, is given by \cite{smith_surface-aware_2012}.

The left-definite problem is less examined. As we show in this article, the eigenvalue problem is naturally formulated in a Pontryagin space, and, in such a setting, one can prove, in the left-definite case, that the eigenvalues are real and that the eigenfunctions form a basis for the underlying function space. We prove this result, stated in theorem \ref{real-complete}, in appendix \ref{S-real-proof}.

With these completeness results, we may apply the residue calculus techniques of \cite{titchmarsh_eigenfunction_1962} to extend the results of \cite{fulton_two-point_1977} to the left-definite problem. Indeed, \cite{fulton_two-point_1977} uses a combination of Hilbert space methods as well as residue calculus techniques to prove various convergence results for the right-definite problem. However, only theorem 1 of \cite{fulton_two-point_1977} makes use of Hilbert space methods. If we extend Fulton's theorem 1 to the left-definite problem, then all the results of \cite{fulton_two-point_1977} will apply equally to the left-definite problem. A left-definite analogue of theorem 1 of \cite{fulton_two-point_1977}, along with its proof, is given in appendix \ref{S-A-left-fulton}.

\subsection{A Pontryagin space theorem} \label{S-A-Pontryagin}

A Pontryagin space $\Pi_\kappa$, for a finite non-negative integer $\kappa$, is a Hilbert space with a $\kappa$-dimensional subspace of elements satisfying 
\begin{align}
	\inner{\phi}{\phi}<0.
\end{align}
An introduction to the theory of Pontryagin spaces can be found in \cite{iohvidov_spectral_1960} as well as in the monograph of \cite{bognar_indefinite_1974}. Another resource is the monograph of \cite{azizov_linear_1989} on linear operators in indefinite inner product spaces.

Pontryagin spaces admit a decomposition 
\begin{equation}
	\Pi_{\kappa} = \Pi^+ \oplus \Pi^-
\end{equation}
into orthogonal subspaces $(\Pi^+, +\inner{\cdot}{\cdot})$ and $(\Pi^-, -\inner{\cdot}{\cdot})$. Moreover, one can associate with a Pontryagin space $(\Pi_\kappa,\inner{\cdot}{\cdot})$ a corresponding Hilbert space $(\Pi, \inner{\cdot}{\cdot{}}_+)$ where the positive-definite inner product $\inner{\cdot}{\cdot}_+$ is defined by
\begin{equation}\label{induced_hilbert_inner}
	\inner{\phi}{\psi}_+ = \inner{\phi_+}{\psi_+} - \inner{\phi_-}{\psi_-},  \quad \phi,\psi \in \Pi,
\end{equation}
where $\phi=\phi_++\phi_-$ and $\psi=\psi_++\psi_-$, with $\phi_{\pm},\psi_{\pm}\in \Pi^{\pm}$ \citep{azizov_linear_1981}. 

As a prerequisite to proving theorem \ref{real-complete}, we require the following.

\begin{theorem}[Positive compact Pontryagin space operators]\label{positive-compact}

	Let $\Ac$ be a positive compact operator in a Pontryagin space $\Pi_\kappa$ and suppose that $\lambda=0$ is not an eigenvalue. Then all eigenvalues are real and the corresponding eigenvectors form an orthonormal basis for $\Pi_\kappa$. There are precisely $\kappa$ negative eigenvalues and the remaining eigenvalues are positive. Moreover, positive eigenvalues have positive eigenvectors and negative eigenvalues have negative eigenvectors.
\end{theorem}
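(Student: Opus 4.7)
The plan is to decompose the problem into a finite-dimensional piece where the Pontryagin structure is essential and an infinite-dimensional piece that reduces to standard Hilbert-space spectral theory. First I would observe that, since $\inner{\cdot}{\cdot}_+$ differs from $\inner{\cdot}{\cdot}$ only by a sign flip on the finite-dimensional subspace $\Pi^-$, the two inner products induce the same topology on $\Pi_\kappa$; hence $\Ac$ remains compact in the associated Hilbert space $(\Pi,\inner{\cdot}{\cdot}_+)$. Using the Pontryagin invariant subspace theorem \cite{bognar_indefinite_1974}, I would then extract a $\kappa$-dimensional $\Ac$-invariant maximal non-positive subspace $\Nc$. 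The positivity of $\Ac$ together with the assumption that $\lambda=0$ is not an eigenvalue forces $\Nc$ to contain no isotropic vectors, so $\Nc$ is in fact negative-definite.

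Next I would analyze $\Ac$ on the two complementary pieces. On $\Nc$ the restriction $\Ac|_\Nc$ is selfadjoint with respect to the positive-definite inner product $-\inner{\cdot}{\cdot}|_\Nc$; moreover, any eigenpair $(\lambda,\phi)$ in $\Nc$ satisfies
\begin{equation}
	\lambda \, \inner{\phi}{\phi} = \inner{\Ac\,\phi}{\phi} \geq 0,
\end{equation}
and with $\inner{\phi}{\phi}<0$ this forces $\lambda \leq 0$, with strictness following because $0$ is not an eigenvalue. This yields exactly $\kappa$ strictly negative eigenvalues and a $\inner{\cdot}{\cdot}$-orthonormal basis of negative eigenvectors in $\Nc$. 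On the $\inner{\cdot}{\cdot}$-orthogonal complement $\Nc^{[\perp]}$, which is $\Ac$-invariant by the $J$-selfadjointness of $\Ac$, the restricted indefinite inner product is positive-definite (by non-degeneracy of $\inner{\cdot}{\cdot}|_\Nc$ combined with the signature count of $\Pi_\kappa$); hence $(\Nc^{[\perp]},\inner{\cdot}{\cdot})$ is a genuine Hilbert space and $\Ac|_{\Nc^{[\perp]}}$ is a compact, positive, selfadjoint operator on it. The classical Hilbert-Schmidt theorem then provides a complete orthonormal basis of positive eigenvectors with strictly positive eigenvalues, where strictness again uses that $0$ is not an eigenvalue. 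Concatenated with the basis of $\Nc$, these assemble into a complete $\inner{\cdot}{\cdot}$-orthonormal system satisfying $\inner{\phi_m}{\phi_n}=\pm \delta_{mn}$, proving all three claims simultaneously.

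The main obstacle I anticipate is the existence of the $\kappa$-dimensional $\Ac$-invariant maximal non-positive subspace $\Nc$: this is a genuinely deep result from Pontryagin-Iohvidov-Krein theory whose proof requires either a Schauder-type fixed-point argument on the Grassmannian of $\kappa$-dimensional non-positive subspaces or a delicate resolvent analysis, and I would simply cite it rather than reproduce it. The only other delicate step is verifying that the splitting $\Pi_\kappa = \Nc \oplus \Nc^{[\perp]}$ really decomposes the space with $\Nc^{[\perp]}$ positive-definite, which is standard linear algebra of indefinite inner product spaces once the signature of $\Nc$ is established; everything else reduces to standard compact-operator theory in a Hilbert space.
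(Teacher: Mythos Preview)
Your approach is correct but takes a genuinely different route from the paper. The paper proceeds by assembling a chain of cited structural results for self-adjoint operators on Pontryagin spaces: reality of the spectrum of positive operators \cite[theorem VII.1.3]{bognar_indefinite_1974}, definiteness of eigenspaces when $\lambda=0$ is not an eigenvalue and the resulting semi-simplicity \cite[theorem VII.1.2 and lemma II.3.8]{bognar_indefinite_1974}, density of the span of the generalized eigenspaces for compact operators with trivial kernel \cite[lemma 4.2.14]{azizov_linear_1989}, and finally a signature-counting argument \cite[theorem IX.1.4]{bognar_indefinite_1974} to show that exactly $\kappa$ of the eigenvectors are negative. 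Your approach instead invokes a single deep result, the Pontryagin--Krein invariant-subspace theorem, to split $\Pi_\kappa$ into a $\kappa$-dimensional negative-definite invariant piece and a positive-definite complement, and then disposes of each piece by elementary finite-dimensional linear algebra and the classical Hilbert--Schmidt theorem respectively. Your route is arguably more transparent about the mechanism: it shows structurally \emph{where} the $\kappa$ negative eigenvalues live and reduces everything else to standard Hilbert-space spectral theory. The paper's route avoids that single heavy citation but trades it for several Pontryagin-space results used in sequence; neither is more elementary overall. One small point worth making explicit in your write-up is the argument that $\Nc$ contains no isotropic vectors: this goes through Cauchy--Schwarz applied to the positive semidefinite form $(\phi,\psi)\mapsto\inner{\Ac\phi}{\psi}$, so that $\inner{\Ac\phi}{\phi}=0$ forces $\Ac\phi=0$ by nondegeneracy of $\inner{\cdot}{\cdot}$, contradicting the hypothesis on the kernel.
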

\begin{proof}
	By theorem VII.1.3 in \citet{bognar_indefinite_1974} the eigenvalues are all real. Moreover, since $\lambda=0$ is not an eigenvalue, then all eigenspaces are definite \citep[][theorem VII.1.2]{bognar_indefinite_1974} and hence all eigenvalues are semi-simple \citep[][lemma II.3.8]{bognar_indefinite_1974}.

	Since $\Ac$ is a compact operator and $\lambda=0$ is not an eigenvalue, then the span of the generalized eigenspaces is dense in $\Pi_\kappa$ \cite[][lemma 4.2.14]{azizov_linear_1989}. Since all eigenvalues are semi-simple, then all generalized eigenvectors are eigenvectors and so the span of the eigenvectors is dense in $\Pi_\kappa$. Orthogonality of eigenvectors can be shown as in a Hilbert space.
	
	Let $\lambda$ be an eigenvalue and $\phi$ the corresponding eigenvector. By the positivity of $\Ac$, we have
	\begin{equation}
		\inner{\Ac \, \phi}{\phi} = \lambda \inner{\phi}{\phi} \geq 0.
	\end{equation}
	Since all eigenspaces are definite, it follows that positive eigenvectors must correspond to positive eigenvalues and negative eigenvectors must correspond to negative eigenvalues.
	
	Finally, by theorem IX.1.4 in \cite{bognar_indefinite_1974}, any dense subset of $\Pi_\kappa$ must contain a negative-definite $\kappa$ dimensional subspace. Consequently, there are $\kappa$ negative eigenvectors and hence $\kappa$ negative eigenvalues.
\end{proof}

\subsection{Proof of theorem \ref{real-complete}} \label{S-real-proof}

\begin{proof}
The proof for the left-definite case is essentially the standard proof \citep[e.g.,][section 5.10]{debnath_introduction_2005} with theorem \ref{positive-compact} substituting for the Hilbert-Schmidt theorem. We give a general outline nonetheless.

 First, it is well-known that $\Lc$ is self-adjoint in $L^2_\mu$ \citep[e.g.,][]{russakovskii_operator_1975,russakovskii_matrix_1997}. Since $\lambda =0$ is not an eigenvalue, then the inverse operator $\Lc^{-1}$ exists and is an integral operator on $L^2_\mu$. For an explicit construction, see section 4 in \citet{walter_regular_1973}, \cite{fulton_two-point_1977}, and \cite{hinton_expansion_1979}. The eigenvalue problem for $\Lc$, equation \eqref{Op-Eigenproblem}, is then equivalent to
\begin{equation}
	\Lc^{-1} \, \phi = \lambda^{-1} \, \phi
\end{equation}
and both problems have the same eigenfunctions.

The operator $\Lc^{-1}$ is a positive compact operator and so satisfies the requirements of theorem \ref{positive-compact}. Application of theorem \ref{positive-compact} to $\Lc^{-1}$ then assures that all eigenvalues $\lambda_n$ are real, the eigenfunctions form an orthonormal basis for $L^2_\mu$, and the sequence of eigenvalues $\{\lambda_n\}_{n=0}^\infty$ is countable and bounded from below.

The claim that the eigenvalues are simple is verified in \citet{binding_left_1999} for the left-definite problem. Alternatively, an argument similar to that of \cite{fulton_two-point_1977} and \citep[][page 12]{titchmarsh_eigenfunction_1962} can be made to prove the simplicity of the eigenvalues.
\end{proof}

\subsection{Extending \cite{fulton_two-point_1977} to the left-definite problem}\label{S-A-left-fulton}

The following is a left-definite analogue of theorem 1 in \cite{fulton_two-point_1977}. The proof is almost identical to the right-definite case \citep{fulton_two-point_1977, hinton_expansion_1979} with minor modifications. Essentially, since $\inner{\Psi}{\Psi}$ can be negative, we must replace these terms in the inequalities below with the induced Hilbert space inner product $\inner{\Psi}{\Psi}_+$ given by equation \eqref{induced_hilbert_inner}. Our $L^2_\mu$ Green's functions $G$ corresponds to $\tilde G$ in \citet{hinton_expansion_1979}.

\begin{theorem}[A left-definite extension of Fulton's theorem 1]
Let $\Psi \in L^2_\mu$ be defined on the interval $[z_1,z_2]$ by
	\begin{equation}
		\Psi(z) = 
		\begin{cases}
			\Psi(z_i) \quad &\textrm{at } z=z_i, \textrm{ for } i \in S,\\
			\psi(z) \quad &\textrm{otherwise},
		\end{cases}
	\end{equation}
where $\psi \in L^2$ and $\Psi(z_i)$ are constants for $i \in S$. The eigenfunctions $\Phi_n$ are defined similarly (see \S \ref{math-section}). 
\begin{itemize}
	\item[(i)] Parseval formula: For $\Psi \in L^2_\mu$, we have
	\begin{equation}\label{parseval-formula}
		\inner{\Psi}{\Psi} = \sum_{n=0}^\infty \frac{\abs{\inner{\Psi}{\Phi_n}}^2}{\inner{\Phi_n}{\Phi_n}}.
	\end{equation}
	\item[(ii)] For $\Psi \in D(\Lc)$, we have
	\begin{equation} \label{expansion-fulton-theorem1}
		\Psi = \sum_{n=0}^\infty \frac{\inner{\Psi}{\Phi_n}}{\inner{\Phi_n}{\Phi_n}}  \Phi_n.
	\end{equation}
	with equality in the sense of $L^2_\mu$. Moreover, we have 
	\begin{equation}\label{expansion-int-series}
		\psi = \sum_{n=0}^\infty \frac{\inner{\Psi}{\Phi_n}}{\inner{\Phi_n}{\Phi_n}}  \phi_n,
	\end{equation}
	which converges uniformly and absolutely for $z \in[z_1,z_2]$ and may be differentiated term-by-term, with the differentiated series converging uniformly and absolutely to $\psi'$ for $z \in [z_1,z_2]$. The boundaries series
	\begin{equation}\label{boundary_series_appendix}
		\Psi(z_i)  = \sum_{n=0}^\infty \frac{\inner{\Psi}{\Phi_n}}{\inner{\Phi_n}{\Phi_n}}  \Phi_n(z_i),
	\end{equation}
	for $i \in S$, is absolutely convergent.
\end{itemize}
\end{theorem}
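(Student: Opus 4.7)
The plan is to port the Fulton--Hinton right-definite argument to the Pontryagin space $L^2_\mu = \Pi_\kappa$ by replacing every use of Cauchy--Schwarz with its counterpart in the associated Hilbert space $(\Pi, \inner{\cdot}{\cdot}_+)$ of equation \eqref{induced_hilbert_inner}. First I would establish (i): by Theorem \ref{real-complete} the set $\{\Phi_n\}_{n=0}^\infty$ is an orthonormal basis of $L^2_\mu$ with $\inner{\Phi_n}{\Phi_n} = \epsilon_n \in \{\pm 1\}$, and the decomposition $\Pi = \Pi^+ \oplus \Pi^-$ splits along the basis (eigenvectors with $\epsilon_n > 0$ spanning $\Pi^+$, those with $\epsilon_n < 0$ spanning $\Pi^-$), so $\{\Phi_n\}$ is also orthonormal with respect to $\inner{\cdot}{\cdot}_+$. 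The standard Hilbert-space Parseval identity $\inner{\Psi}{\Psi}_+ = \sum_n |\inner{\Psi}{\Phi_n}_+|^2$, combined with $\inner{\Psi}{\Phi_n}_+ = \epsilon_n \inner{\Psi}{\Phi_n}$ and $\epsilon_n = 1/\inner{\Phi_n}{\Phi_n}$, then yields the signed Parseval formula \eqref{parseval-formula}. This simultaneously gives $L^2_\mu$-convergence of \eqref{expansion-fulton-theorem1}.

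For the uniform convergence claims in (ii), I would use the resolvent representation of $\Psi \in D(\Lc)$. Since $\lambda = 0$ is not an eigenvalue, the inverse $\Lc^{-1}$ is an integral operator on $L^2_\mu$ with a Green's function $G(z,\zeta)$ constructible as in \cite{walter_regular_1973} and \cite{hinton_expansion_1979}. It is continuous on $[z_1,z_2]^2$ away from the diagonal and admits the bilinear expansion $G(z,\zeta) = \sum_n (\lambda_n \epsilon_n)^{-1} \Phi_n(z) \overline{\Phi_n(\zeta)}$. Setting $F = \Lc \Psi$ and $c_n = \inner{\Psi}{\Phi_n}/\inner{\Phi_n}{\Phi_n}$, the $N$-th partial-sum remainder for the interior series is
\begin{equation*}
\phi(z) - \sum_{n=0}^{N-1} c_n \phi_n(z) = \int_{z_1}^{z_2} G(z,\zeta) \, R_N(\zeta) \, d\mu(\zeta),
\end{equation*}
where $R_N$ is the tail of the $L^2_\mu$-expansion of $F$. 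Applying Cauchy--Schwarz in $(\Pi, \inner{\cdot}{\cdot}_+)$ bounds the remainder by $\inner{G(z,\cdot)}{G(z,\cdot)}_+^{1/2} \, \inner{R_N}{R_N}_+^{1/2}$; the first factor is uniformly bounded in $z \in [z_1,z_2]$, and the second tends to zero by Parseval in $\inner{\cdot}{\cdot}_+$. This yields the uniform and absolute convergence of \eqref{expansion-int-series}. Repeating with $\partial_z G$ in place of $G$ gives the differentiated expansion, and evaluating the bound at $z = z_i$ for $i \in S$ produces the absolute convergence of the boundary series \eqref{boundary_series_appendix}.

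The principal obstacle is justifying the bilinear expansion of $G$ and of $\partial_z G$ together with a uniform $\inner{\cdot}{\cdot}_+$-bound on $G(z,\cdot)$ in $z$ over the whole closed interval, including at the $\lambda$-dependent boundaries where the measure $\mu$ has point masses. In the right-definite case of \cite{fulton_two-point_1977} this control comes for free from the positivity of the inner product. In the Pontryagin setting, switching from $\inner{\cdot}{\cdot}$ to $\inner{\cdot}{\cdot}_+$ flips the sign of at most $\kappa$ terms of the Green's function sum; these finitely many terms are bounded continuous functions of $z$ that can be set aside, and the remaining tail inherits the uniform control present in the right-definite argument. Once this technical point is settled, the uniform and absolute convergence of the series for $\psi$, $\psi'$, and $\Psi(z_i)$ follows essentially mechanically from the resolvent identity.
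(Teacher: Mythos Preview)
Your proposal is correct and follows essentially the same strategy as the paper: pass to the associated Hilbert space $(\Pi,\inner{\cdot}{\cdot}_+)$, use the Green's function of $\Lc^{-1}$ to express $\phi_n(z)$ as a Fourier coefficient of $G(z,\cdot)$, obtain a $z$-uniform bound on $\inner{G(z,\cdot)}{G(z,\cdot)}_+$ (or on $\inner{G}{\Lc G}_+$), and combine this with $\inner{\Lc\Psi}{\Lc\Psi}_+<\infty$ via Cauchy--Schwarz; then repeat with $\partial_z G$ for the differentiated series.

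One organizational difference worth flagging: the paper bounds the \emph{absolute} series directly by writing $c_n\phi_n(z)=\bigl(\phi_n(z)/(\lambda-\lambda_n)\bigr)\cdot\bigl((\lambda-\lambda_n)c_n\bigr)$ and applying Cauchy--Schwarz to the two $\ell^2$ sequences, using the identity $\phi_n(z)=(\lambda-\lambda_n)\inner{G(z,\cdot,\lambda)}{\Phi_n}$. Your integral remainder estimate $\bigl|\psi(z)-\sum_{n<N}c_n\phi_n(z)\bigr|\leq \inner{G(z,\cdot)}{G(z,\cdot)}_+^{1/2}\inner{R_N}{R_N}_+^{1/2}$ establishes uniform convergence but not, by itself, absolute convergence; to recover absolute convergence you need the bilinear expansion of $G$ (which you do cite) to produce $\sum_n\lambda_n^{-2}|\phi_n(z)|^2\leq C$ uniformly, and then the same Cauchy--Schwarz split as the paper. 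This is implicit in what you wrote, but the paper's formulation makes it a single step.
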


\begin{proof}
	The Parseval formula \eqref{parseval-formula} is a consequence of the completeness of the eigenfunctions $\{\Phi_n\}_{n=0}^\infty$ in $L^2_\mu$, given by theorem \ref{real-complete}, and theorem IV.3.4 in \cite{bognar_indefinite_1974}. Similarly, the expansion \eqref{expansion-fulton-theorem1} is also due to completeness of the eigenfunctions. 
	
	We first prove that the series  \eqref{expansion-int-series} converges uniformly and absolutely for $z \in[z_1,z_2]$. We begin with the identity
	\begin{equation}\label{phi-green}
		\phi_n(z) = (\lambda - \lambda_n) \inner{G(z,\cdot,\lambda)}{\Phi_n}
	\end{equation} 
	where $\lambda \in \C$ is not an eigenvalue of $\Lc$, and $G$ is the $L^2_\mu$ Green's function [see equation (8) in \cite{hinton_expansion_1979}]. Then 
	\begin{equation}\label{green_convergence}
		\sum_{n=0}^\infty \lambda_n \frac{\abs{\phi_n}^2}{\abs{\lambda -\lambda_n}^2} = \sum_{n=0}^\infty \lambda_n \abs{\inner{G(z,\cdot,\lambda)}{\Phi_n}}^2 \leq \inner{G(z,\cdot,\lambda)}{\Lc G(z,\cdot,\lambda)}_+ \leq B_1(\lambda)
	\end{equation}
	where $\inner{\cdot}{\cdot}_+$ is the induced Hilbert space inner product given by equation \eqref{induced_hilbert_inner} and $B_1(\lambda)$ is a $z$ independent upper bound \cite[equation 9 in][]{hinton_expansion_1979}. In addition, since $\Psi \in D(\Lc)$, then $\inner{\Lc \Psi}{\Lc \Psi}_+ < \infty$. Thus, we obtain 
	\begin{equation}\label{Lc_psi_2_convergence}
		\sum_n \lambda_n^2 \abs{\inner{\Psi}{\Phi_n}}^2 = \inner{\Lc \Psi}{\Lc \Psi}_+ < \infty.
	\end{equation}
	
	The uniform and absolute convergence of \eqref{expansion-int-series} follows from
	\begin{align}
		\sum_{n=0}^\infty \abs{\frac{\inner{\Psi}{\Phi_n}}{\inner{\Phi_n}{\Phi_n}}  \phi_n} &= \sum_{n=0}^\infty \abs{ \left( \frac{\phi_n}{\lambda - \lambda_n}\right) \left(\lambda - \lambda_n \right) \frac{\inner{\Psi}{\Phi_n} }{\inner{\Phi_n}{\Phi_n}}} \\
			&\leq \sqrt{\left(\sum_{n=0}^\infty \abs{\frac{\phi_n}{\lambda-\lambda_n}}^2\right) \left( \sum_{n=0}^\infty \abs{\lambda-\lambda_n}^2 \abs{\inner{\Psi}{\Phi_n}}^2 \right)}
	\end{align}
	along with equations \eqref{green_convergence} and \eqref{Lc_psi_2_convergence}. The absolute convergence of the boundary series \eqref{boundary_series_appendix} follows as well.
	
	To show that the series \eqref{expansion-int-series} is term-by-term differentiable, it is sufficient to show that the differentiated series converges uniformly for $z \in [z_1,z_2]$ \citep[][section 6.14, theorem 33]{kaplan_advanced_1993}. The proof of the unform convergence of the differentiated series follows from the identity \citep{hinton_expansion_1979}
	\begin{equation}
		\frac{\phi_n'}{\lambda - \lambda_n} = \frac{\mathrm{d}}{\mathrm{d}z} \inner{G(z,\cdot,\lambda)}{\Phi_n} = \inner{\partial_z G(z,\cdot,\lambda)}{\Phi_n}.
	\end{equation}
	and a similar argument.
	
\end{proof}


%
%

%


\bibliography{references}

\end{document}